\def\eu{\ensuremath{\mathrm{e}}}
\def\du{\ensuremath{\mathrm{d}}}
\newtheorem{thm}{Theorem}
\newtheorem{lem}{Lemma}
\begin{document}

\title{Degree-targeted cascades in modular, degree-heterogeneous networks}

\author{Jordan Snyder}
\email{jsnyd@uw.edu}
\affiliation{Graduate Group in Applied Mathematics, University of California, Davis, CA 95616}
\affiliation{Department of Applied Mathematics, University of Washington, Seattle, WA 98195}
\author{Weiran Cai}
\affiliation{Department of Computer Science, University of California, Davis, CA 95616}
\author{Raissa M. D'Souza}
\affiliation{Department of Computer Science, University of California, Davis, CA 95616}
\affiliation{Department of Mechanical and Aerospace Engineering, University of California, Davis, CA 95616}
\affiliation{Santa Fe Institute, Santa Fe, NM 87501}

\begin{abstract}

The dynamics of cascading activation, such as rapid changes in public opinion and the outbreak of disease epidemics, have a crucial dependence on the connectivity patterns among the agents. We study cascading dynamics in modular, degree-heterogeneous networks, and consider the impact of intra-module seeding strategy on inter-module spread. Specifically, we establish that although activating the highest-degree nodes
is more effective than random selection at growing a cascade locally, there is a critical level of inter-module connectivity required for a cascade to cross from one module to another, irrespective of the seeding strategy. We present an analytical proof of this statement for the case that each module has the same degree distribution and all module pairs have the same inter-module connectivity, while our simulation results suggest its validity for more general situations, including a ring of modules. Interestingly, we find that on a network comprised of two modules, this critical level is primarily determined by the degree distribution of the \emph{alter} module, as opposed to the seed module.
Our analytical approach extends a method developed by Gleeson, but is able to capture different seeding strategies using only one dynamical variable per module, namely the conditional exposure probability. Our work shows that the possibility of a global cascade depends sensitively on inter-module connectivity, and less on the intra-module seeding strategy. This suggests, for example, that slight changes to inter-module connectivity can be a feasible intervention strategy to promote or inhibit global cascades.

\end{abstract}

\maketitle

\section{Introduction and Background}

Many important collective phenomena can be viewed as cascades of activation, and studies using network models help reveal general principles that relate the interaction patterns of individuals to the spread of activity among them. Prime examples of this kind of insight include the absence of an epidemic threshold for the SIS model of epidemic spreading on networks with sufficiently broad degree distributions \cite{Pastor-Satorras2001}, and the dependence of global cascades on underlying network characteristics \cite{Watts2002}. Here we focus especially on the impact of seeding strategy on the outcome of a cascade process in the presence of modularity and degree heterogeneity, since these properties are present in real-world networks and have repeatedly been identified as having a strong effect on the outcome of dynamics on networks \cite{Saavedra2011,May2008,Memmott2004,Burgos2007,Turalska2019}. For example, we may be interested in the vulnerability of a mutualistic ecosystem to cascades of extinction events \cite{Dunne2002,Sole2001,Morris2003}, or we may be interested in the efficacy of information diffusion in the case that seeding of information is targeted at well-connected individuals within a certain community \cite{Vicario2016,Morone2015,Iribarren2011}.

A network-theoretic treatment of cascades was established through the use of generating functions, as first shown by Watts \cite{Watts2002}, who formulated a simple network dynamical model that exhibits cascades whose sizes may either follow a power-law distribution or a bimodal distribution, depending on network properties. Generating functions had previously been used to study structural properties of networks by Callaway et al., who considered percolation (site, bond, and joint site-bond) with degree-dependent site occupation \cite{Callaway2000}. A now classic result they show is that networks with a sufficiently broad degree distribution lose large-scale connectivity rapidly upon removal of their highest-degree nodes, a fact that has been explored in depth by several other studies \cite{Cohen2001,Albert2000,Broder2011}.

With respect to cascading dynamics, Gleeson generalized the problem class of that posed by Watts, and developed a method for approximating the outcome. Gleeson's method is valid for irreversible binary-state dynamics, where the probability that any given node becomes active is a non-decreasing function of its number of active neighbors. Moreover the network on which the dynamics take place should be either a degree-corrected stochastic block model \cite{Karrer2011}, or a configuration model with arbitrary degree correlations \cite{Newman2003}. In the former case, the initial activation probability should be independent of degree.

For the more structured case of modular networks, the outcome of activation cascades has been studied by Nematzadeh et al. They specifically consider two modules and established conditions on the connectivity between them that allow an information cascade started in one module to spread to the other \cite{Nematzadeh2014}. Naturally there is a lower bound on the density of links connecting the two modules in order for a cascade to spread, but interestingly, there is also an upper bound. Intuitively the reason is that links between the modules make it more difficult for nodes in the first module (the one where the cascade is initialized) to be activated, because  links to the alter module are less likely to connect to an active node. They corroborate their results by direct simulation of the network dynamics and by the analytical approximation developed by Gleeson \cite{Gleeson2008}.

Note that another class of dynamics termed ``cascades'' are load-shedding or sandpile dynamics, as in the classic model by Bak-Tang-Wiesenfeld \cite{Bak1987,Bak1988}. In these dynamics, nodes hold units of load, and shed load to their neighbors when they reach capacity. Those neighbors may subsequently reach capacity, shedding more load, and a cascade of load shedding ensues. It has been observed that on modular, degree-heterogeneous networks, degree correlations can affect the ability of a cascade to spread from one module to another \cite{Turalska2019}. It has also been shown that for sandpile dynamics, there exists an optimal level of connectivity between modules that minimizes the probability of large local cascades \cite{Brummitt2012}. This is in contrast with the optimality found in linear threshold dynamics \cite{Nematzadeh2014}, where global cascades are facilitated by increasing connectivity between modules. This demonstrates the same network structure can be optimal for different objective functions given different dynamics.

We are interested to understand how local intra-module activation can influence inter-module spread. More specifically, we are concerned with understanding the effects of degree-targeted seeding on the dynamics of cascades on modular networks with heterogeneous degree distributions. We begin by specifying a modular, degree heterogeneous network ensemble, and dynamics that occur on those networks. We then outline an analytical method, extending the work by Gleeson \cite{Gleeson2008}, that lets us succinctly describe the dynamics at the level of modules, even when seeding probability depends on node degree. Finally we report results of both direct simulation of network dynamics and the analytical approximation, observing good agreement. We find that while seeding high-degree nodes facilitates cascades with fewer seed nodes, there is a critical level of module interconnectivity required for a cascade to spread between modules, which is independent of the seeding strategy.

{ While we focus initially on networks consisting of two modules with the same degree distribution, our analysis suggests that similar results hold in more general situations. In particular, we find evidence of a critical level of interconnectivity in a network comprised of a ring of modules, and in a two-module network where the modules have different degree distributions. Interestingly, results in the latter case demonstrate that the degree distribution of the alter module, rather than the seed module, is key in determining the critical level of interconnectivity.}

\section{Problem Statement}

\label{sec:problem_statement}

To make the problem concrete, we incorporate degree-targeted seeding into modular, degree-heterogeneous networks, building on the model studied by Nematzadeh et al. for studying the latter \cite{Nematzadeh2014}. The model consists of two-state linear threshold dynamics \cite{Granovetter1978,Watts2002} on a network with two modules, wherein the degree of each node is drawn from some prescribed degree distribution, and the fraction of intra- vs. inter-module links in the network is specified. Aside from these constraints, links are placed at random, i.e. without degree correlations; previous work on cascades in modular, degree-heterogeneous networks in the context of cascades of load redistribution has shown that degree correlations (both within and between modules) has an impact on the ability of cascades to spread from one module to another \cite{Turalska2019}, but this is beyond the scope of the present work, which is focused on monotonic threshold dynamics. By adjusting the degree distribution and the fraction of intra- vs. inter-module links, we can explore both the space of degree heterogeneity and that of modularity. We complete the specification of the dynamics by selecting a set of \emph{seed nodes} that are active initially.

We consider that all the seed nodes are contained within a single module, so that we can discern conditions under which a cascade can spread to the second module, and consider two different seeding strategies. First, we select a certain fraction of nodes uniformly at random, i.e. independently of their degree. Second, we select the same fraction of nodes but ensure that they are of the highest possible degree. We compare the two seeding strategies across the joint space of degree heterogeneity and modularity.
As expected, the number of seed nodes required for a global cascade is lower for targeting high degree nodes than for uniform seeding, but surprisingly we find that the critical fraction of inter-module links for the cascade to spread from one module to another is independent of the seeding protocol used. As discussed later, this can provide a feasible intervention strategy to enhance or suppress global cascades.

To develop a framework for analyzing the process, we apply a generalization of a modeling approach introduced by Gleeson \cite{Gleeson2008}. That approach applies to irreversible binary-state dynamics, where the probability that a node becomes active is a nondecreasing function of its number of active neighbors. This model class is quite general; it includes both fractional and absolute threshold models, as well as dynamics that compute site percolation and $k$-core sizes \cite{Gleeson2008}. A thorough and general description of this model class is given in Secs. \ref{sec:network_model} and \ref{sec:dynamics}.

{In Gleeson's framework, one assumes that the network can be approximated as a tree. Under the tree-like assumption, one can divide the network into levels that start at the leaves and end at the root, and one assumes that cascades of activation spread ``up'' the tree. This set of assumptions leads to a set of equations for the probability that a node with a given degree, located on a given tree-level, is activated. Remarkably, one can interpret the tree-level as a time variable and obtain highly accurate models for the time-course of cascade processes, as well as the eventual state of the network (i.e. total fraction of nodes that are active).}

Gleeson's technique applies directly when seed nodes are selected uniformly at random; here we demonstrate that his approach can be adapted to incorporate degree-targeted seeding. A direct application of Gleeson's technique would require one dynamical variable for each pair of (module, degree) values. In contrast, we show that the \emph{conditional exposure probability}
is independent of the nodes's degree and is sufficient 
to recover the total fraction of nodes in each module that are active. A similar analysis was carried out by Hackett in the case of a simple network without modular structure \cite{Hackett2011}; our derivation treats the more complicated case of modular network structures which allows us to establish the consequences of modular structure on cascade dynamics.


\subsection{Random network model}
\label{sec:network_model}
Here we describe the ensemble of networks that we consider. The networks in question are modular, and have arbitrary degree distributions in each module. For now we suppose, for simplicity, that all modules are of equal size; analogous formulations for unequally-sized modules are also possible.

Let $d$ denote the number of modules, and $n$ the number of nodes in each module. For each module $i=1,\dots,d$, let $p^{(i)}_k$ be the probability that a randomly selected node from module $i$ has degree $k$. Further, let $e\in [0,1]^{d\times d}$ be the \emph{mixing matrix} \cite{Newman2003a}, where $e_{ij}$ is the fraction of links leaving a module-$i$ node that end at a module-$j$ node. We take our network to be undirected, so $e_{ij} = e_{ji}$.

To produce a network that conforms to the above specifications, we can perform a stub-matching procedure, as is standard practice to sample from the configuration model \cite{Molloy1995}, where a stub is a half-edge attached to a node.
The networks we produce in this way are essentially the same as those that conform to the degree-corrected stochastic blockmodel, introduced by Karrer and Newman \cite{Karrer2011}, although their presentation is in terms of edge existence probabilities rather than stub matching. We use stub matching here as it leads to an algorithm with faster runtime and smaller memory requirements for large networks with asymptotically constant mean degree.
Since here we impose a nontrivial modular structure, implementation details are given in Appendix \ref{sec:stub-matching}.

\subsection{Dynamics} \label{sec:dynamics}

Here we describe the class of dynamics that we consider, namely monotonic threshold dynamics.

Let $N$ be the total number of nodes in a network and let $A\in \{0,1\}^{N\times N}$ denote its adjacency matrix: $A_{ij}=1$ if and only if nodes $i$ and $j$ share an edge, and we assume all edges are undirected and unweighted. Let $k_i = \sum_j A_{ij}$ denote the degree of node $i$.

Let $u\in \{0,1\}^N$ denote the \emph{state} of the system; $u_i=1$ if node $i$ is active, and zero otherwise. Following Watts \cite{Watts2002}, we introduce a dynamic on $u$ according to the rule
\begin{equation}
u_i(t+1) = \begin{cases}
1 & \sum_j A_{ij}u_j(t) > \theta k_i \text { or } u_i(t)=1\\
0 & \text{else}
\end{cases}
\end{equation}
where $\theta\in[0,1]$ is the \emph{threshold}. In words, a node becomes active if at least a fraction $\theta$ of its neighbors are active, and an active node remains active for all time.

A few properties to note about these dynamics:
\begin{enumerate}
	\item Most importantly, there is an absorbing state, in contrast with other work analyzing binary state approximations of population dynamics in mutualistic ecosystems \cite{Campbell2010,Campbell2012}. This also rules out the possibility of compensatory perturbations \cite{Motter2004,Cornelius2013,Sahasrabudhe2011}, because activating more nodes will never decrease the number of nodes that are eventually activated.
	
	\item We consider the limit where all nodes update their state at the same time. 
	Because the dynamics are monotonic (i.e. an active node never becomes inactive), the eventual state of the system is independent of the order in which nodes are updated. Therefore we choose to update all nodes in synchrony and focus on the eventual steady state and not on the dynamics leading towards it.
	
	\item The threshold $\theta$ is here assumed to be the same for all nodes. In general $\theta$ may vary from node to node, and can be assigned at random.
\end{enumerate}
This dynamic can happen on any network; we will focus on the modular and degree-heterogeneous networks described in Sec. \ref{sec:network_model}.

\section{Methods}
\label{sec:analytics}

In this section we show how to derive equations governing the fraction of active nodes in each module, adapting results of \cite{Gleeson2008} to allow for degree-targeted seeding with minimal increase in computational burden.

\subsection{Treelike approximation with degree-dependent seeding}
\label{sec:degree_targeted_seeding}

We now show how to extend Gleeson's framework to allow for seed nodes to be selected with a probability that depends on their degree. While degree-targeted seeding is possible already in Gleeson's formulation, a direct application would require keeping track of a dynamical variable for each degree class. In contrast, we find that equivalent results are possible using only one dynamical variable per module, as shown below.

By degree-targeted seeding, we mean that the probability that a node is active initially depends both on its module membership and its degree, and we write $\rho^{(i)}_{0, k}$ as the probability that a randomly selected node in module $i$ having degree $k$ is active initially. This allows us to model analytically cases where, for example, a cascade is initialized by activating the highest-degree nodes in a single module. The case that initialization is independent of node degree is recovered by setting $\rho^{(i)}_{0, k} = \rho^{(i)}_{0}$ for all $k$. The total fraction of nodes initially activated is given by
\begin{equation}
\rho^{(i)}_{0,\text{tot}} = \sum_k p^{(i)}_k \rho^{(i)}_{0,k}.
\end{equation}

Following \cite{Gleeson2008,Gleeson2007}, we assume the network can be modeled as an infinite tree, with levels indexed starting with the leaves at the bottom (level $n = 0$) and the root infinitely high (level $n\to \infty$). We then formulate a recursion relation that describes updating nodes' states up the levels of the tree (from children to parents), assuming at every stage that all nodes at lower levels have already been updated. To do this, let $q^{(i)}_{n,k}$ be the probability that a node in module $i$ having degree $k$ at level $n$ of the tree is active, conditional on its parent (at level $n+1$) being inactive. Thanks to this conditioning, we can write a recursion relation for $q^{(i)}_{n,k}$ directly.

Accounting for degree-targeted seeding, we have
\begin{align}
q^{(i)}_{n+1,k} =& \rho_{0,k}^{(i)} + (1- \rho_{0,k}^{(i)}) \sum_{m=0}^{k-1} \binom{k-1}{m} \left(\overline{q}^{(i)}_n\right)^m\nonumber\\
 &\times \left(1-\overline{q}^{(i)}_n\right)^{k-1-m} F^{(i)}(m,k)
\label{eq:q_update_degree_weighted}
\end{align}
where $F^{(i)}(m,k)$ is the probability that a node in module $i$ having degree $k$ and $m$ active neighbors becomes active, and $\overline{q}^{(i)}_n$, which we call the \emph{conditional exposure probability}, is the probability that a level-$n$ child of an inactive module-$i$ parent is active. It is given by
\begin{equation}
\overline{q}^{(i)}_n = \frac{1}{\sum_j e_{ij}} \sum_j e_{ij} \left[\sum_k \frac{k}{z^{(j)}} p^{(j)}_k q^{(j)}_{n,k}\right],
\label{eq:qbar_degree_weighted}
\end{equation}
where $z^{(j)} \coloneqq \sum_k k p^{(j)}_k$ is the average degree of nodes in module $j$. The total density of active nodes in each module $i$ is given by
\begin{align}
\rho^{(i)}_{n+1} &=&& \sum_k p^{(i)}_k \biggl[\rho_{0,k}^{(i)} + (1-\rho_{0,k}^{(i)}) \nonumber\\
&\,&&\times \sum_{m=0}^{k}\binom{k}{m} \left(\overline{q}^{(i)}_n\right)^m \left(1-\overline{q}^{(i)}_n\right)^{k-m}F^{(i)}(m,k)\biggr]\nonumber\\
&\coloneqq&& H^{(i)}(\overline{q}^{(i)}_n)
\label{eq:rho_update_degree_weighted}
\end{align}

The conditional exposure probabilities $\overline{q}^{(i)}_n$ represent the probability that a randomly chosen level-$n$ neighbor of a node in module $i$ is active, hence the edge-following degree distribution $kp^{(j)}_k/z^{(j)}$ in Eq. (\ref{eq:qbar_degree_weighted}). With this in mind, we can interpret the update rule Eq. (\ref{eq:q_update_degree_weighted}) as follows. Given a node in module $i$ with degree $k$ at tree level $n$ whose parent is not active, it was active from the start with probability $\rho_{0,k}^{(i)}$. With probability $1-\rho_{0,k}^{(i)}$, it was not active from the start, and has a chance to be activated by its $k-1$ children. Since we assume the network is treelike, each of those $k-1$ children is active independently with probability $\overline{q}^{(i)}_n$, meaning that the probability that $m$ of them are active at once is $\binom{k-1}{m} \left(\overline{q}^{(i)}_n\right)^m \left(1-\overline{q}^{(i)}_n\right)^{k-1-m}$. If $m$ children are active, then the focal node becomes active with probability denoted by $F^{(i)}(m,k)$.

Next, consider the unconditional probability $\rho_{n+1}^{(i)}$ that a node in module $i$ at tree level $n+1$ is active. If we choose a node at random from module $i$, it has degree $k$ with probability $p^{(i)}_k$. Given that it has degree $k$, it was active from the start with probability $\rho_{0,k}^{(i)}$. Otherwise, it has $k$ neighbors, each of which is active with probability $\overline{q}^{(i)}_n$. The total number of active neighbors then follows the binomial distribution seen in Eq. (\ref{eq:rho_update_degree_weighted}), and the focal node becomes active with probability $F^{(i)}(m,k)$ if it has $m$ active neighbors.


Finally, we show that the conditional exposure probability obeys a closed recursion relation with respect to $n$, and that the dependence on degree $k$ enters in only through the static parameters $p^{(j)}_k$ and $\rho^{(i)}_{0,k}$. Composing Eq. (\ref{eq:qbar_degree_weighted}) with Eq. (\ref{eq:q_update_degree_weighted}), we have
\begin{align}
&\overline{q}^{(i)}_{n+1} =\frac{1}{\sum_j e_{ij}} \sum_j e_{ij} \bigg[\sum_k \frac{k}{z^{(j)}} p^{(j)}_k \bigg(\rho_{0,k}^{(j)} + (1- \rho_{0,k}^{(j)}) \nonumber \\
& \times\sum_{m=0}^{k-1} \binom{k-1}{m} \left(\overline{q}^{(j)}_n\right)^m \left(1-\overline{q}^{(j)}_n\right)^{k-1-m} F^{(j)}(m,k)\bigg)\bigg] \nonumber  \\
&\coloneqq G^{(i)}(\overline{q}_n)
 \label{eq:qbar_update_closed_degree_weighted}
\end{align}

Note that if $\rho^{(i)}_{0,k} = \rho^{(i)}_0$ for all $k$, then Eq. (\ref{eq:qbar_update_closed_degree_weighted}) is exactly equivalent to the formulation in \cite{Gleeson2008}.

We can further describe the trajectory of the dynamics obtained by assuming that nodes update asynchronously at a rate of $f$ per unit time. Following the same reasoning as in \cite{Gleeson2008}, we have
\begin{equation}
\label{eq:ODE_degree_weighted}
\begin{split}
\frac{\du\overline{q}^{(i)}(t)}{\du t} &= f \left[G^{(i)}(\overline{q}(t)) - \overline{q}^{(i)}(t)\right]^+\\
\frac{\du{\rho}^{(i)}(t)}{\du t} &= f \left[H^{(i)}(\overline{q}^{(i)}(t)) - \rho^{(i)}(t)\right]^+
\end{split}
\end{equation}
where $[\cdot]^+$ denotes the positive part.


\section{Results}

\subsection{Model parameter specification}
\label{sec:optimal_modularity}
Here we describe extensions to the research presented in \cite{Nematzadeh2014}, on existence of optimal levels of interconnectivity in modular networks. In that work, the authors consider a network composed of two modules, with a fraction $\mu\in[0,1]$ of the links joining nodes in the same module and the remaining $(1-\mu)$ fraction of the links joining nodes in different modules. This corresponds to choosing the mixing matrix to be
\begin{equation}
e = \begin{bmatrix}
1-\mu & \mu \\
\mu & 1-\mu
\end{bmatrix}.
\end{equation}
On this network, they consider linear threshold dynamics of the type described above, with seed nodes localized to one of the modules.

The main result of \cite{Nematzadeh2014} is that for certain values of the seed density, there is an optimal range of values of the modularity parameter $\mu$, such that within this range the cascade covers the whole system, while on either side of this range the cascade remains localized to the module where it began. Their conclusions follow from both direct simulations of dynamics on networks and calculations based on the analytical framework developed by Gleeson \cite{Gleeson2008}.

We now generalize their results in two ways. First, we consider degree distributions with a tunable extent of degree heterogeneity, and second, we allow for degree-targeted seeding. Sec. \ref{sec:general_network_structures} and Appendices \ref{sec:monotonicity-appendix},\ref{sec:well-ordering-appendix}, and \ref{sec:next-nearest-neighbor-appdx} consider the possibility of more than two modules and different degree distributions in each module.

The first aspect, degree heterogeneity, was discussed in the SI of \cite{Nematzadeh2014}, where the authors present results for the LFR benchmark networks \cite{Lancichinetti2008}, and state that degree heterogeneity does not change the results qualitatively. Here, instead, we treat degree heterogeneity explicitly as a control parameter, quantified by $p_\text{nest}\in[0,1]$, so named as a reference to the concept of \emph{nestedness} in theoretical ecology \cite{Atmar1993,Bascompte2003}, which has been found to be largely explained by degree heterogeneity \cite{Jonhson2013}. The parameter $p_\text{nest}$ enters the analysis through the degree distribution:
\begin{equation}
p_k = p_\text{nest} p^\text{pow}_k + (1-p_\text{nest})p^\text{poi}_k
\end{equation}
where $p^\text{pow}$ and $p^\text{poi}$ are power law and Poisson degree distributions, respectively, each with mean $z$ (which we take, for now, to be 20). For completeness, we have
\begin{equation}
p^\text{pow}_k = {1 \over \zeta(\gamma, \lambda)} (\lambda + k )^{-\gamma}
\end{equation}
with $\lambda$ chosen such that $\sum_k k p^\text{pow}_k = z$, and $\zeta(\gamma,\lambda) = \sum_k (\lambda + k)^{-\gamma}$ is the Hurwitz zeta function, and
\begin{equation}
p^\text{poi}_k = \frac{z^k \eu^{-z}}{k!}.
\end{equation}
{We note that the above degree distribution puts nonzero weight on $k=0$, and that it is not guaranteed that random networks with this degree distribution will be connected. In numerical experiments, we circumvent this problem by selecting the largest connected component, which for the parameter values we consider is typically nearly the entire network. Throughout, we use $\gamma = 2.5$ for the power law exponent.}
%

The question of degree-targeted seeding, however, was not addressed in \cite{Nematzadeh2014}, and we use the new analytic framework presented in Sec. \ref{sec:degree_targeted_seeding} to do so. Specifically, we compare two different seeding strategies: max-degree and uniform. Max-degree seeding (described in detail in Appendix \ref{sec:Validation}) selects the highest-degree nodes present in a network, while uniform seeding selects nodes with equal probability. Based on previous work on percolation \cite{Albert2000,Cohen2000,Cohen2001}, we anticipate that max-degree seeding will produce different outcomes than uniform seeding, and that this difference will increase with increased degree heterogeneity.

Yet, from both numerical simulations and analytical derivations, we find however that the modular structure of the network, in terms of the mixing matrix, places a limit on the possible extent of an inter-module cascade when initiated in a single module, independent of the seeding protocol.

\subsection{Impact of seeding on cascade size}

An overview of our results is presented in Fig. \ref{fig:optimal_modularity_heatmaps}; the top two rows were generated using Eq. (\ref{eq:rho_update_degree_weighted})-(\ref{eq:qbar_update_closed_degree_weighted}) while the bottom two rows are results from direct simulation of the network dynamics. The second and fourth rows (uniform seeding) exactly recreate the results of \cite{Nematzadeh2014}. As expected, the system becomes more resilient to cascading failures as the degree heterogeneity increases when seeding is independent of degree, while the opposite tendency is observed when max-degree seeding is used. This spectrum of behavior aligns well with the famous ``robust yet fragile" nature of random graphs with power law degree distributions, which are robust to random attacks and fragile to targeted attacks\cite{Doyle2005,Cohen2000,Cohen2001,Callaway2000,Albert2000}.


\begin{figure*}
\includegraphics[width=\textwidth]{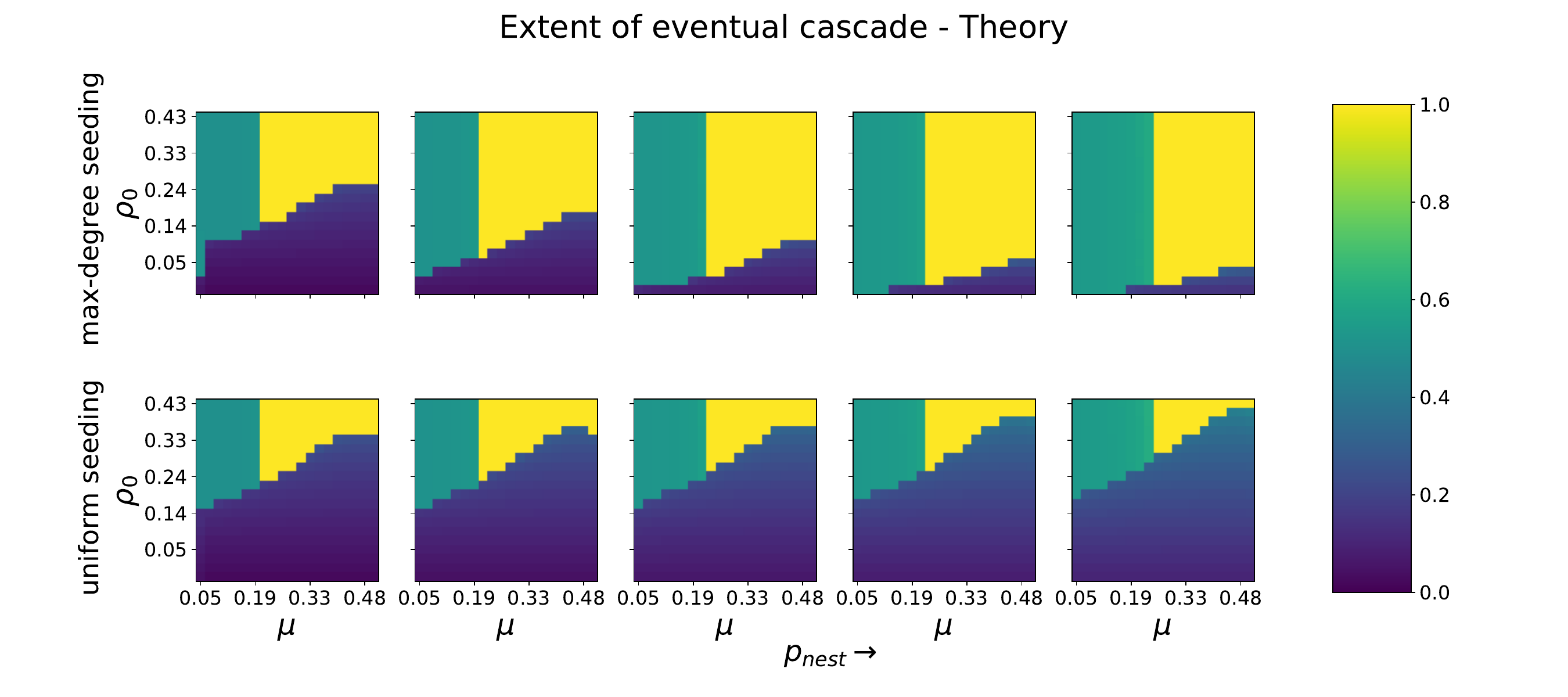}
\includegraphics[width=\textwidth]{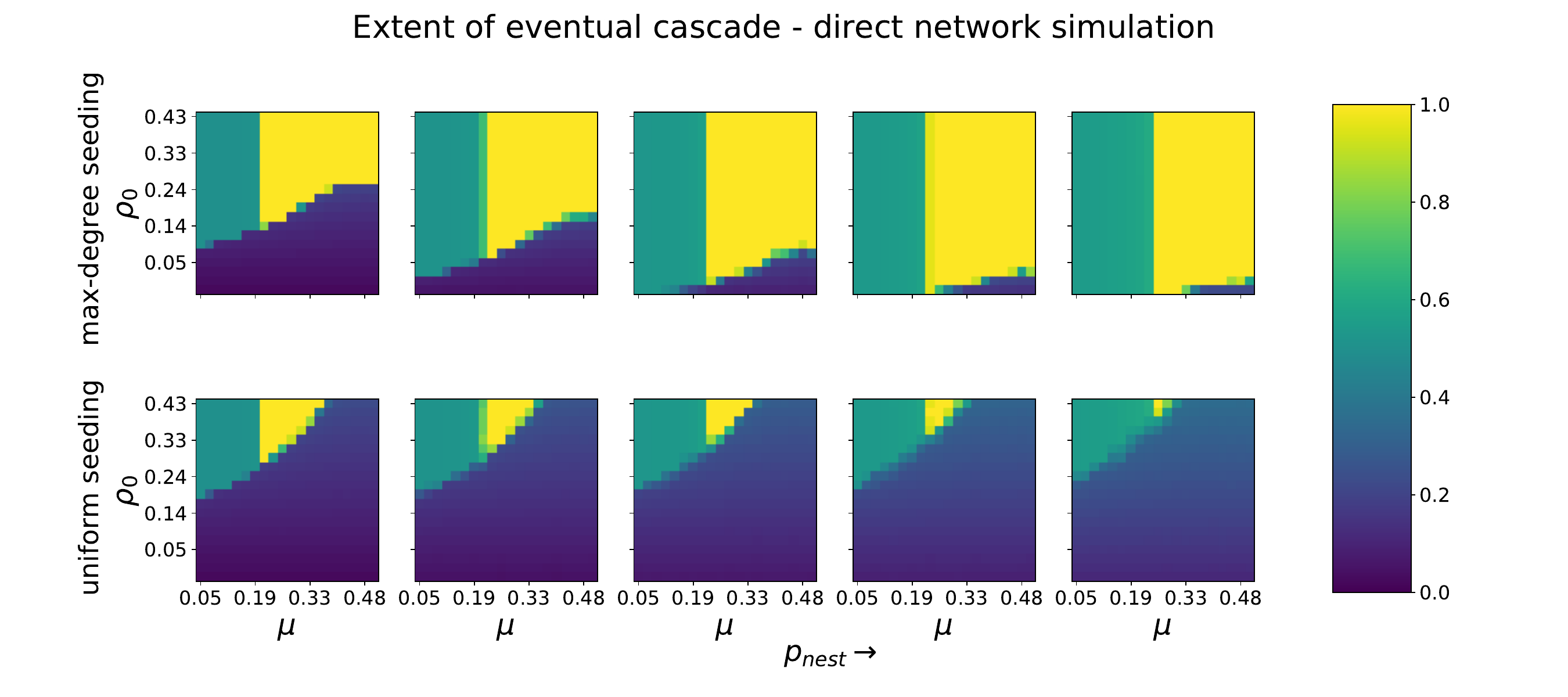}
\caption{Summary of the joint effect of inter-module connections ($\mu$) and degree heterogeneity ($p_\text{nest}$) on the extent of cascade spreading, under both uniform and degree-targeted seeding, as predicted by an analytic treatment, Eq. (\ref{eq:rho_update_degree_weighted})--(\ref{eq:qbar_update_closed_degree_weighted}) (top), and by direct simulation of the network dynamics, on networks of size $N = 2.5\times 10^4$, averaged over ten realizations (bottom). Color indicates extent of the eventual cascade, as a fraction of the whole network; yellow is 1, green is 0.5, and blue ranges between 0 and 0.3. The green region, present in every panel for $\mu \lesssim 0.2$, corresponds to the cascade completely covering the first module (where the seed nodes are located) and not spreading at all to the second. The blue region corresponds to the situation that the cascade spreads to only part of the first module. Parameters used here are: $\theta =0.4$, $\overline{k} = 20$. In these figures, $\rho_0$ is the fraction of the first module that is infected, which is off from the notation in \cite{Nematzadeh2014} by a factor of two. As expected, max-degree seeding leads to global cascades with fewer seed nodes than required by uniform seeding. Yet, the vertical line separating the green from yellow regions shows that there is a minimal value of inter-module connectivity $\mu$ required to allow a global cascade, regardless of how seed nodes are selected. \label{fig:optimal_modularity_heatmaps}}
\end{figure*}


However, both simulation and theory show an unexpected result that there is a critical level of interconnectivity required for inter-module cascades, which is independent of the seeding strategy. To see this, notice that a feature shared in common between the uniform and degree-targeted cases is the left boundary of the region corresponding to total activation (see Fig. \ref{fig:optimal_modularity_heatmaps}). That boundary corresponds to the critical value of $\mu$, the inter-module connectivity, required to allow complete activation of the second module.

The essential reason that this boundary is at the same location for both seeding protocols is that for $\mu$ smaller than $\mu_c$, \emph{no} seeding protocol restricted to the first module -- even the one with all nodes activated -- is able to initiate a cascade that fully activates the second. This observation immediately suggests a way to find $\mu_c$, namely the smallest $\mu$ such that having all nodes in the first module active initially leads to activation of the entire network (see Fig. \ref{fig:iterated_map_visualization}). This fact relies crucially on the fact that both modules have the same degree distribution, and that within-module edge density is greater than between-module edge density.

Now we make the above argument precise. We present the logical flow here and refer to the appendices for supporting calculations, {including a generalization to any number $d$ of modules and an arbitrary monotonic activation function.} All statements in this section are to be understood in the context of the reduced dynamics in the thermodynamic limit (i.e. Eq. (\ref{eq:qbar_update_closed_degree_weighted}), for $\overline{q}_n^{(i)}$ the conditional exposure probabilities).

First, we claim that if $\mu<0.5$ and more nodes in module 1 are active than in module 2, then the same will be true after updating all nodes' states. What we mean is that if $\mu<0.5$ and $\overline{q}^{(1)}_n > \overline{q}^{(2)}_n$, then $\overline{q}^{(1)}_{n+1} > \overline{q}^{(2)}_{n+1}$; see Appendix \ref{sec:well-ordering-appendix} for a proof. Clearly, then, if $\overline{q}^{(1)}_0 > \overline{q}^{(2)}_0$ and $\overline{q}^{(2)}_n\to1$ as $n\to\infty$, then $\overline{q}^{(1)}_n\to 1$ as $n\to \infty$ also. In other words, if module 2 is eventually completely activated, then module 1 will also be eventually completely activated. The necessary assumptions for that statement to be true are that the inter-module connectivity $\mu$ is less than $0.5$, both modules have the same degree distribution, and the initially active nodes are only in module 1.

The observation that if $\overline{q}^{(2)}_n \to 1$ then $\overline{q}^{(1)}_n\to 1$ allows us to derive a simpler condition for the eventual complete activation of module 2, and moreover shows that this condition is independent of the seeding protocol. Conceptually, the simplification captures the fact that the eventual state of the system is independent of the order in which nodes' states are updated. This fact rests on the fact that $G$, defined in Eq. (\ref{eq:qbar_update_closed_degree_weighted}, respects the partial order $\le$ on $[0,1]^d$. That is, $\overline{q} \le \overline{r} \implies G(\overline{q}) \le G(\overline{r})$ for all $\overline{q}\in [0,1]^d$; see Appendix \ref{sec:monotonicity-appendix} for proof.

That $G$ is order-preserving immediately implies a ``sandwich'' property, as follows. Suppose that if there are two sequences, $(\overline{q}_n)$ and $(\overline{r}_n)$ defined by $\overline{q}_{n+1} = G(\overline{q}_n)$ (and likewise for $\overline{r}$). First, each of these sequences has a limit, which we denote $\overline{q}_\infty$ and $\overline{r}_\infty$ respectively, and these limits satisfy $G(\overline{q}_\infty) = \overline{q}_\infty$ and $G(\overline{r}_\infty) = \overline{r}_\infty$. Suppose further that $\overline{q}_0 \le \overline{r}_0 \le \overline{q}_\infty$. Then $\overline{q}_n \le \overline{r}_n \le \overline{q}_\infty$ for all $n$, and since $\overline{q}_n\to \overline{q}_\infty$, we conclude $\overline{q}_\infty = \overline{r}_\infty$.

Finally we apply the above reasoning in the following way: Let $u$ be an initial condition in which only nodes in module 1 are active, and assume that the eventual state $u_\infty$ is the one in which all nodes are active (i.e. the vector of all ones). Then let $v$ be the vector of all ones in the first module and all zeros in the second module. Clearly, $u\le v \le u_\infty$, so by the theorem just stated, $v_n \to u_\infty$. In other words: if an initial condition localized to the first module would lead to complete activation of the whole network, then so would the initial condition in which all of the first module is active and none of the second is active.

To summarize, we conclude that in order for a cascade to activate the whole of the second module, it must be the case that the cascade activates all of the first module.
In particular, the fact of whether or not the whole second module is eventually active is independent of the details of the seeding protocol used in the first module, so long as the seeding protocol is sufficient to activate the whole first module.

To derive a condition on $\mu$ that allows a cascade to fully activate the second module, we write down the dynamics of $\overline{q}^{(2)}_n$ under the assumption that all nodes in module 1 are initially active. We have
\begin{align}
\overline{q}^{(2)}_{n+1} &= \mu + (1-\mu)\sum_k \frac{k}{z}p_k \nonumber\\
&\times\sum_{m=\lfloor \theta k \rfloor + 1}^{k-1}\binom{k-1}{m} \left(\overline{q}^{(2)}_n\right)^m \left(1-\overline{q}^{(2)}_n\right)^{k-1-m}
\label{eq:module_2_recursion}
\end{align}

Note crucially that Eq. (\ref{eq:module_2_recursion}) does not depend on the seeding protocol used, because we have replaced the initial condition with one in which all of module 1 is active. Therefore the left-boundary of the yellow region in Fig. \ref{fig:optimal_modularity_heatmaps} is located at the same value of $\mu$ regardless of seeding protocol.

Because Eq. (\ref{eq:module_2_recursion}) is an iterated map in one variable, we can visualize its behavior by plotting the right-hand side and inspecting its intersections with the line $y=x$. We present in Fig. \ref{fig:iterated_map_visualization} such a visual for two different values of $\mu$ and the same Poisson degree distribution with mean $\lambda = 20$.

\begin{figure}
\includegraphics[width=\columnwidth]{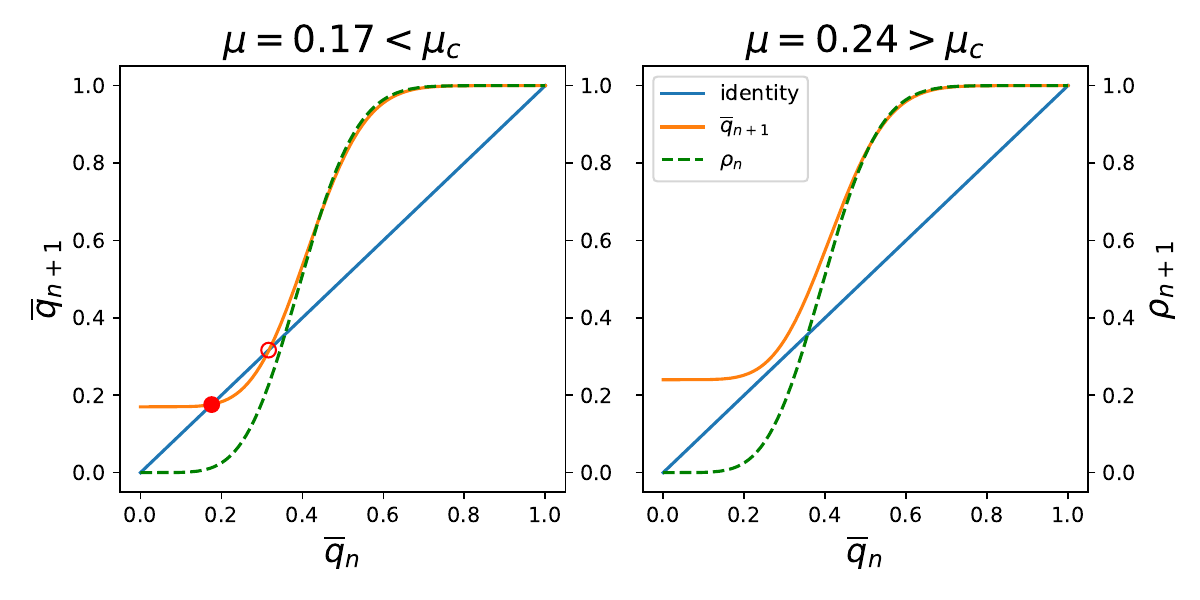}
\caption{Visualization of the iterated map Eq. (\ref{eq:module_2_recursion} for two different values of $\mu$, both with the same degree distribution, a Poisson with mean $\lambda = 20$. Orange curve is the right-hand side of Eq. (\ref{eq:module_2_recursion} while the blue line is the identity, so intersections correspond to fixed points. Notice that for $\mu = 0.24$ there is only one fixed point, at $\overline{q} = 1$, while for $\mu = 0.17$, an additional pair of fixed points (one stable, one unstable, at the solid and open red circles respectively) appears with $\overline{q}<1$, corresponding to partial activation of module 2. The green curve visualizes $\rho$, the fraction of active nodes corresponding to a given value of $\overline{q}$. \label{fig:iterated_map_visualization}}
\end{figure}

We can further compare the solution for $\mu_c$ based on Eq. (\ref{eq:module_2_recursion}) to dynamics on sampled networks. The results are shown in Fig. \ref{fig:mu_c_vs_pnest}. We observe qualitative agreement between the theory and sampled networks in the trend of $\mu_c$ with respect to $p_\text{nest}$, although there is a systematic difference between the prediction of the theory and the interval estimated from network simulations. Still, the intervals resulting from uniform and max-degree seeding of networks show no significant difference, supporting the conclusion that cascade hopping is independent of seeding protocol.

{Finally, we reiterate that the theory outlined here is valid for any number of modules, provided that they have the same degree distribution and that inter-module edge density is the same for all module pairs.}

\begin{figure}
\includegraphics[width=\columnwidth]{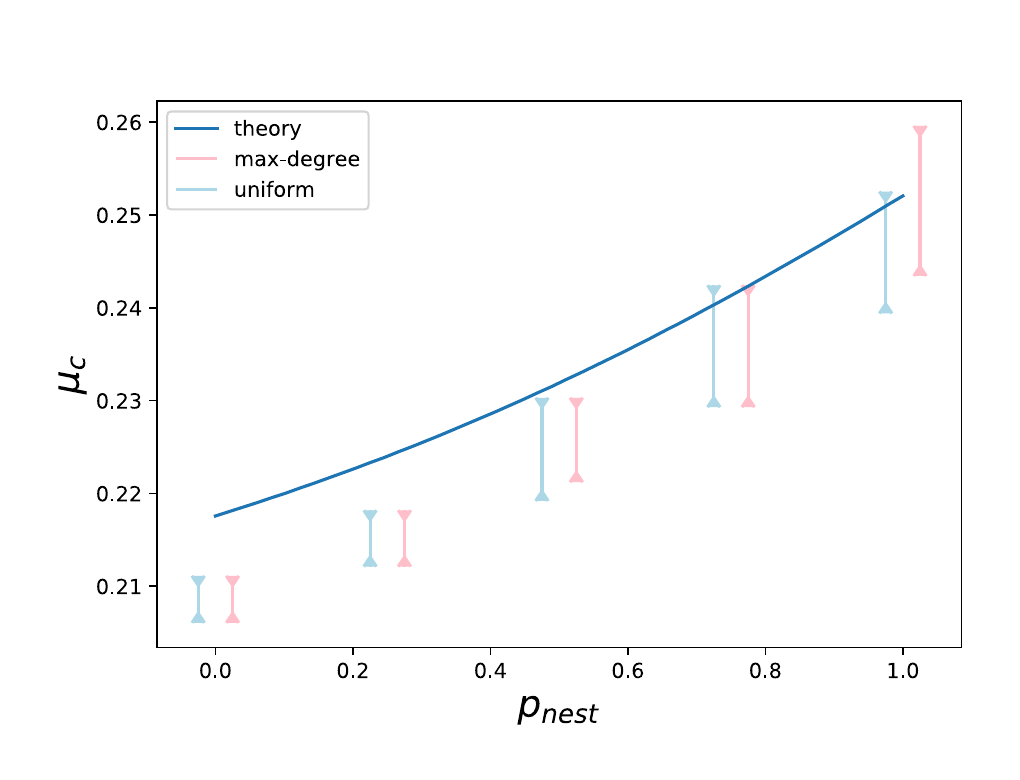}\\
\caption{Comparison of the critical value of $\mu$ required to achieve global activation as predicted by Eq. (\ref{eq:module_2_recursion}) vs. by direct simulation of the network dynamics. Theory curve was obtained via bisection search with respect to $\mu$, for each of 50 values of $p_\text{nest}$ evenly spaced between zero and one. For network simulations, we sampled five values of $p_\text{nest}$ evenly spaced from 0 to 1 (inclusive), and 100 values of $\mu$ evenly spaced between 0.2 and 0.3. For each pair of ($\mu$,$p_\text{nest}$) values, we sampled 10 networks, each of size $N=2.5\times 10^4$. For each network, we select a fraction $\rho_0$ of the nodes to be active initially, either according to maximum degree or uniformly at random. We repeat the experiment for 20 values of $\rho_0$, evenly spaced from 0.05 to 0.5. We then run the dynamics to equilibrium and record the eventual fraction of nodes in the network that are active. We then estimate a bracket around the true value of $\mu_c$ by plotting the largest $\mu$ such that none of the experiments show global activation, and the smallest $\mu$ such that all experiments do show global activation. Note the qualitative agreement with the trend of the theory curve, as well as close agreement between uniform and max-degree seeding. \label{fig:mu_c_vs_pnest}}
\end{figure}

Overall, the results from direct network simulation agree, at least qualitatively, very well with those from the theory, Eq. (\ref{eq:rho_update_degree_weighted})-(\ref{eq:qbar_update_closed_degree_weighted}). There is, however, a systematic difference between theory and direct simulation in the area of parameter space supporting global cascades, as shown in Fig. \ref{fig:mu_c_vs_pnest}. As the theory supposes that the networks are infinite and tree-like, we expect that this discrepancy is due to the presence of clustering (i.e. existence of short loops) in the finite networks we studied. While Melnik et al. showed that theories assuming a tree-like structure can nonetheless be ``unreasonably effective'' for finite networks, they also remark that the Watts threshold model with constant threshold (i.e. the dynamics we consider here) is especially sensitive to the presence of clustering \cite{Melnik2011}.

{We conclude with a heuristic explanation for the direction of the discrepancy between $\mu_c$ given by theory and that observed in real networks. Under the tree-like assumption, there is at most one path connecting any given seed node to any given non-seed node, along which the cascade of activation can proceed. In the actual network, there can be more than one such path, meaning there are multiple paths via which a single seed node can activate a given non-seed node. Since the dynamics we consider are threshold-based, repeated exposure increases the likelihood of activation. Therefore cascades of activation should spread more easily in real networks than the tree-like theory predicts, and fewer inter-module links should be necessary to cause a global cascade (i.e. $\mu_c$ should be lower) in real networks.}

\subsection{More general network structures}
\label{sec:general_network_structures}

The foregoing analysis was performed in a restricted setting, namely two-module networks where both modules have the same degree distribution. Below we present results for more general network structures.

First, we consider the case where the network is composed of two modules of equal size, but with different degree distributions. Specifically, we consider that one module has a Poisson degree distribution while the other has a power law degree distribution, and again investigate the ability of an initial activation localized to one module to spread to the other. The results are summarized in Fig. \ref{fig:heatmaps_different_distributions}.

\begin{figure}
\begin{overpic}[width=\columnwidth]{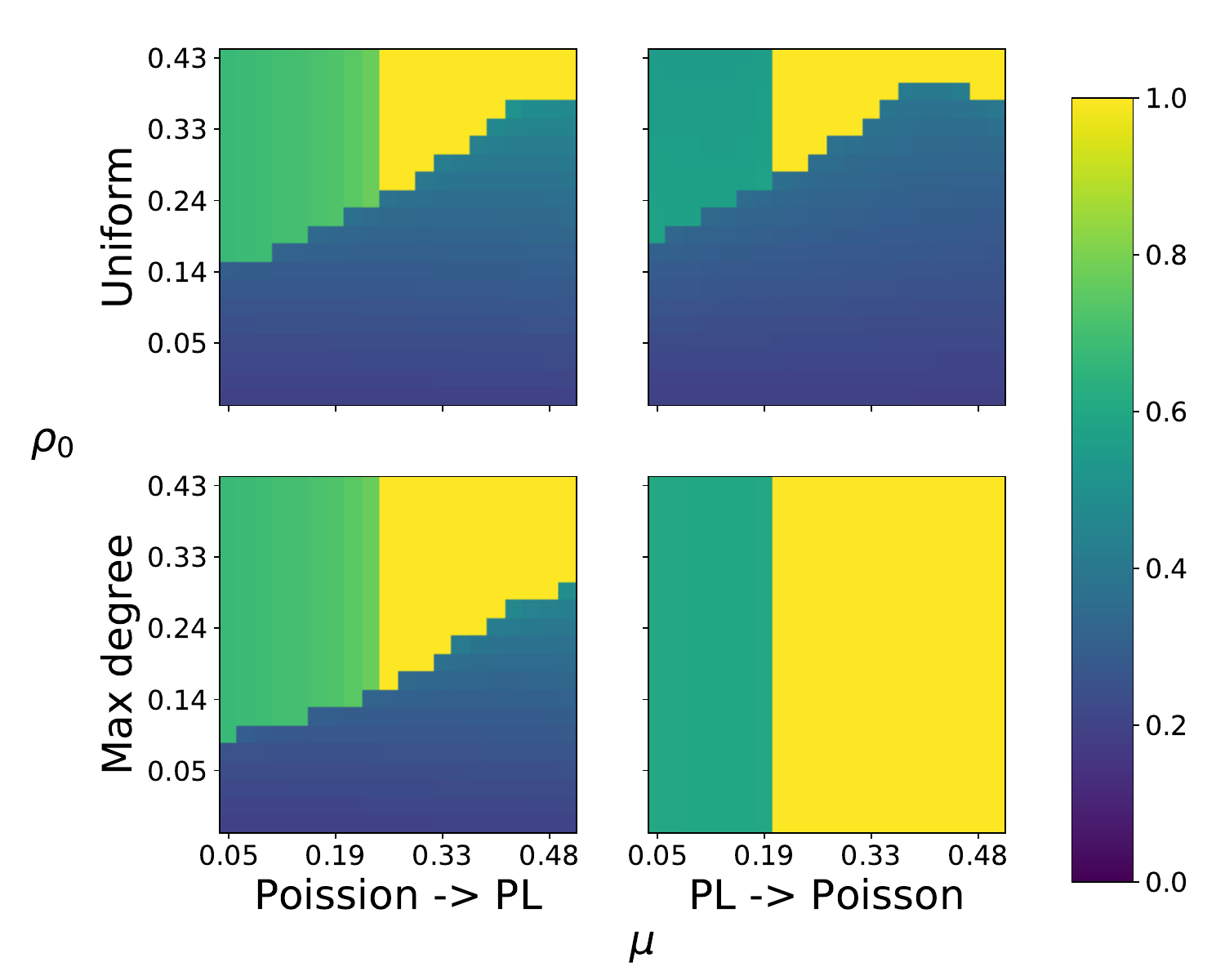}
	\put(43, 80){Theory}
\end{overpic}\\
\vspace{5mm}
\begin{overpic}[width=\columnwidth]{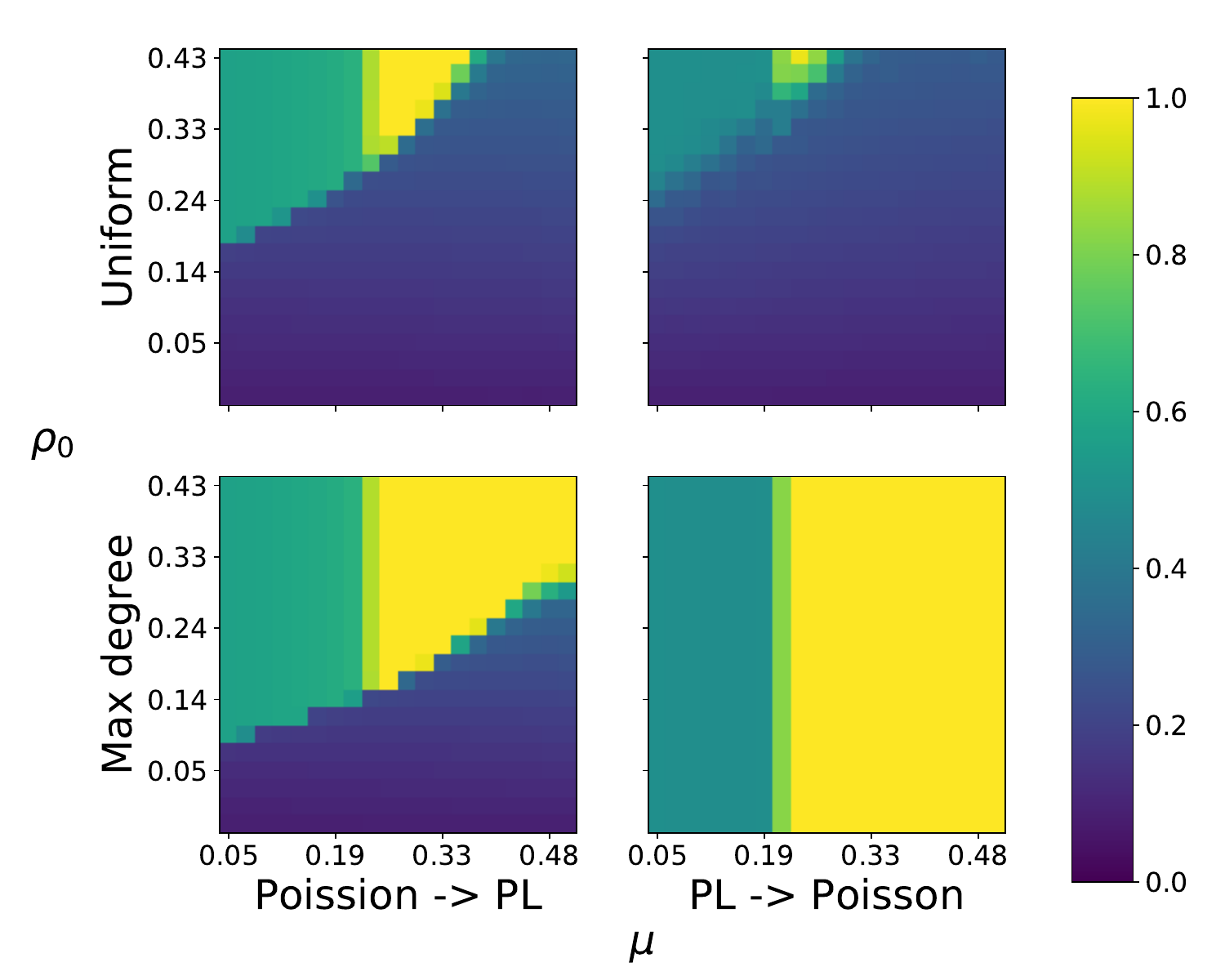}
	\put(35, 80){Network Simulation}
\end{overpic}
\caption{Summary of inter-module cascade spreading between modules with different degree distributions, as determined by direct network simulation (bottom) and analytical treatment (top). In the left column, seed nodes are selected from the module with Poisson degree distribution, and in the right column, seed nodes are selected from the module with the power law degree distribution. The axes are $\mu$, the fraction of inter-module links, and $\rho_0$, the fraction of seed nodes, and color indicates the extent of the eventual cascade, just as in Fig. \ref{fig:optimal_modularity_heatmaps}. \label{fig:heatmaps_different_distributions}}
\end{figure}

Overall, we observe that the impact of degree-targeted seeding is modest when seed nodes are selected from the module with a Poisson degree distribution (left hand column), but dramatic when seed nodes are selected from the module with a power law degree distribution (right hand column). In fact, we can see by comparing the heatmaps in Fig. \ref{fig:heatmaps_different_distributions} to those in Fig. \ref{fig:optimal_modularity_heatmaps} (with $p_\text{nest}= 0$ or $1$) that the degree distribution of the non-seed module appears to have little impact on the eventual cascade size.

Interestingly, we observe the existence of a critical level of interconnectivity required to support a global cascade, just as in the case where the two modules have the same degree distribution. Further, we observe that of the two cases shown in Fig. \ref{fig:heatmaps_different_distributions} (i.e. power-law$\rightarrow$Poisson and Poisson$\rightarrow$power-law), the value of $\mu_c$ is larger in the case that the seeded module has a Poisson degree distribution ($\mu_c \approx 0.23$), and smaller in the case that the seeded module has a power law degree distribution ($\mu_c \approx 0.19$). Meanwhile, Fig. \ref{fig:mu_c_vs_pnest} reveals that in the case of two identical modules, $\mu_c$ increases with degree-heterogeneity. In other words, in the case of different distributions, $\mu_c$ is smaller when the more heterogeneous network is seeded, but for identical modules $\mu_c$ increases with heterogeneity. We therefore conclude that the degree distribution of the \emph{alter} module, rather than the seed module, is key in determining $\mu_c$.


We now investigate the case where the network is composed of more than two modules, but with identical degree distributions. Specifically, we consider that each module has a power law degree distribution with exponent $\gamma = 2.5$, and suppose that the modules are connected in a ring topology. Namely, we suppose that the mixing matrix is
\begin{equation}
e_{ij} = \begin{cases}
\mu & i - j = \pm 1\\
1-2\mu & i = j\\
0 & \text{else}
\end{cases}
\end{equation}
where $d$ is the total number of modules. We consider $d = 3, 4, 6, 8, 10$, and we focus on values of $\mu < 1/3$, so that each module has more links to itself than to any other module. We also considered next-nearest-neighbor coupling in the ring of modules, and found very similar results; see Appendix \ref{sec:next-nearest-neighbor-appdx}. The results presented in this section were obtained from direct simulation of network dynamics, on networks of $n = 1.25\times 10^4$ nodes per module, with a fractional activation threshold $\theta = 0.4$, and averaged over 50 samples.

A summary of numerical results is given in Fig. \ref{fig:summary_fig_multiple_modules_m_1}, with the left column for max-degree and the right column for uniform seeding. First, we observe that activation of the entire network (i.e. the yellow area) is only achieved within our selected parameter space under maximum-degree seeding. In contrast, the greatest level of activation achieved in the uniform seeding scenario is for large $\rho_0$ and small $\mu$, and in these cases we find complete activation of the first module but negligible activation of other modules (see Fig. \ref{fig:detail_fig_multiple_modules_d_3_m_1}).

\begin{figure}
\includegraphics[width = \columnwidth]{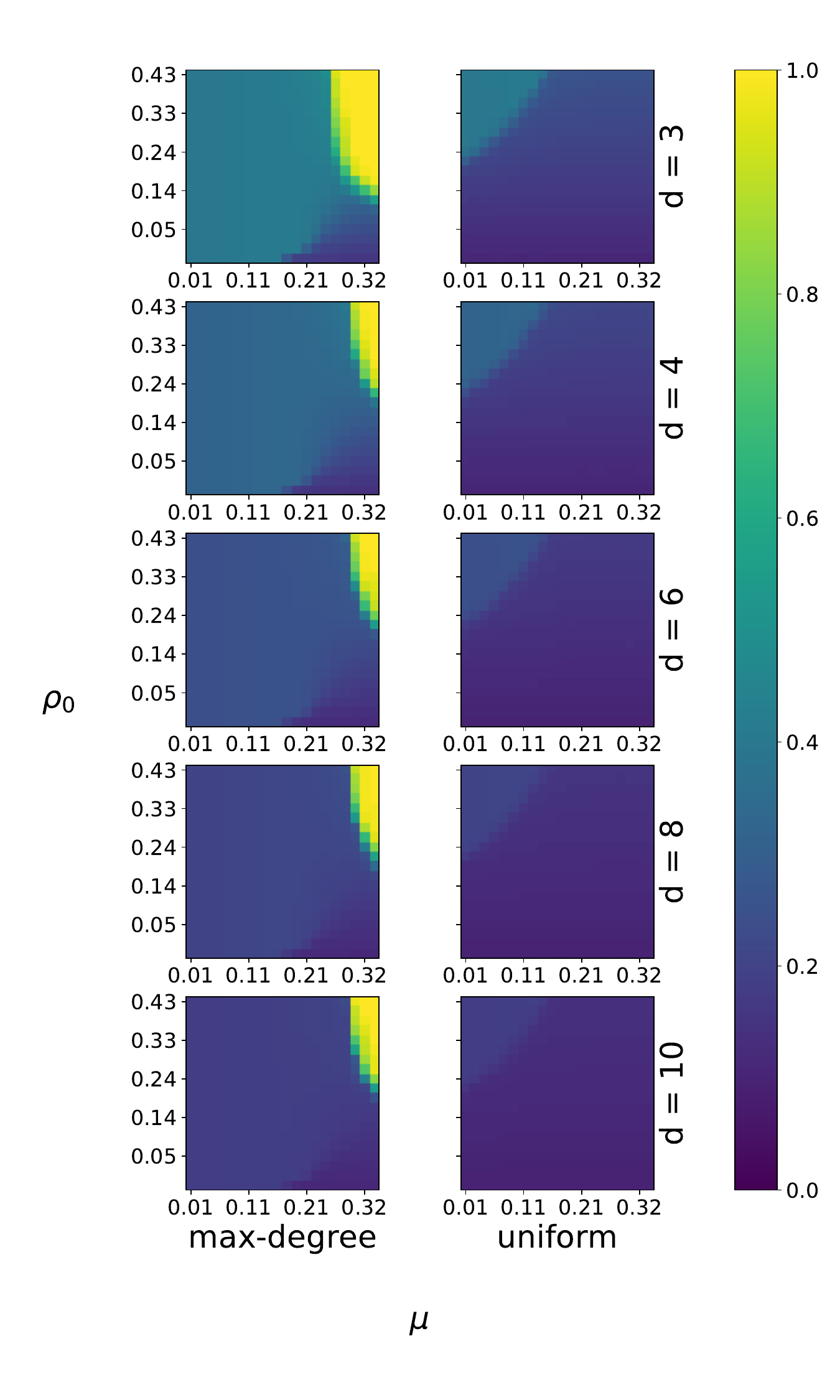}
\caption{Heatmap depicting total fraction of the ring-of-modules network eventually activated under maximum-degree (left) and uniform (right) seeding strategies. Heatmap axes are inter-module link density $\mu$ (horizontal) and fraction of links in the first module activated initially, $\rho_0$ (vertical). Each row corresponds to a different total number of modules, $d \in \{3, 4, 6, 8, 10\}$. As in Fig.\ref{fig:optimal_modularity_heatmaps}, color indicates the fraction of nodes in the whole network eventually activated. \label{fig:summary_fig_multiple_modules_m_1}}
\end{figure}

\begin{figure}
\includegraphics[width = \columnwidth]{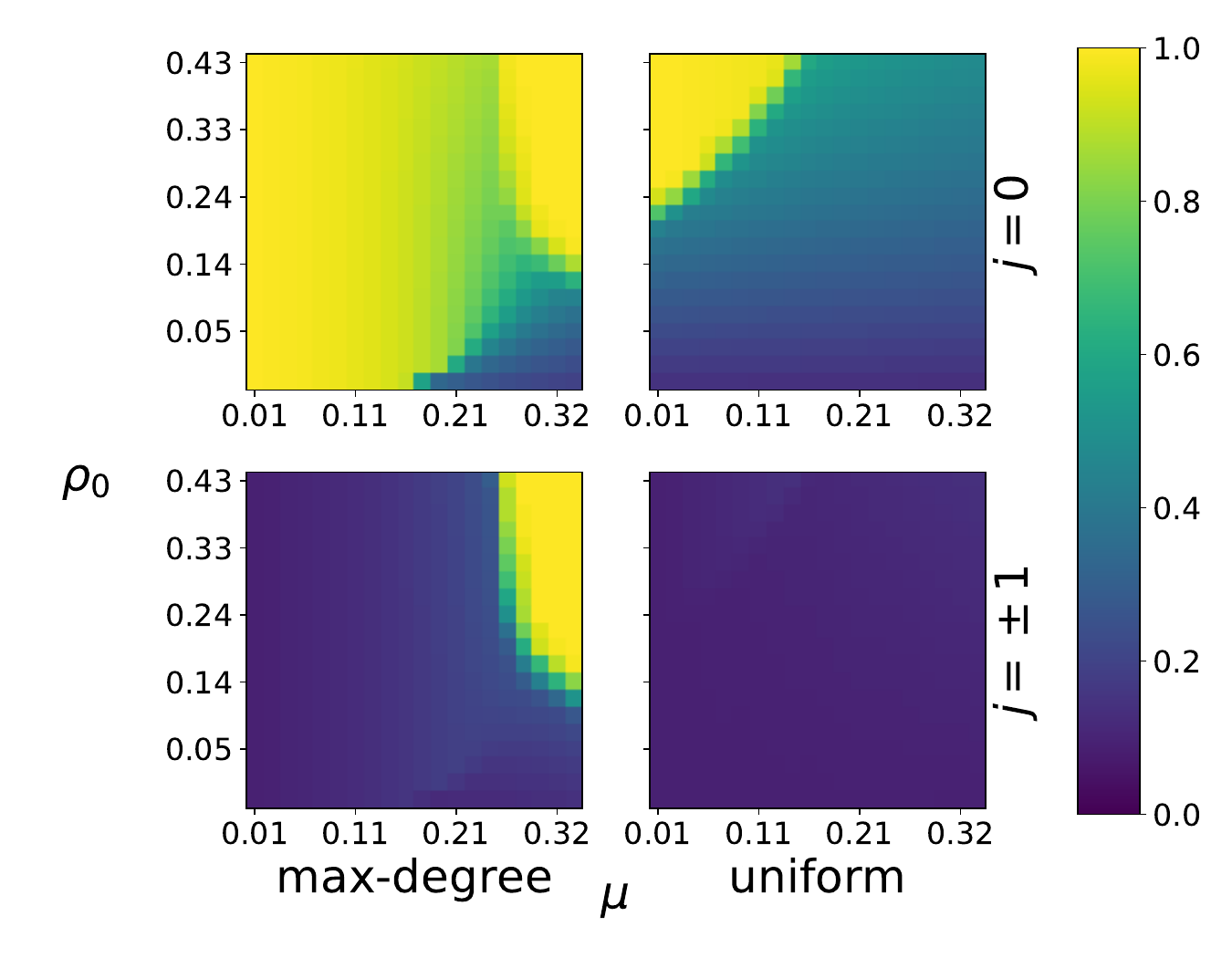}
\caption{Detail view of the fraction of nodes eventually activated in each module under maximum-degree (left) and uniform (right) seeding strategies. Heatmap axes are as in Fig. \ref{fig:summary_fig_multiple_modules_m_1}. Here, the number of modules is $3$, so we only show the eventual activated fraction for module $0$, the seed module, and one of the non-seed modules. Note that here color indicates the fraction of nodes \emph{within each module} that are eventually activated, rather than the fraction of the whole network, i.e. the yellow regions top row indicate that the seed module has become fully activated and indicate nothing about the other modules. \label{fig:detail_fig_multiple_modules_d_3_m_1}}
\end{figure}

These results are consistent with the existence of a critical level of inter-module link density $\mu_c$ required for complete activation of the network, as there is no case where the entire network is activated for $\mu\lesssim0.25$. However, this boundary is not sharp in the way that it is in the two-module case depicted in Fig. \ref{fig:optimal_modularity_heatmaps}, indicating that there are other limiting factors at play that prevent a cascade from reaching the entire network.

\section{Summary and Discussion}

In this work, we considered the dynamics of cascades on modular, degree heterogeneous networks subject to degree-dependent seeding.
Interestingly, we found that the ability of a local cascade to fully activate other modules is independent of the details of how it was initiated. For the cases we investigated, we find the critical fraction $\mu_c$ of links that must connect the two modules ranges between $0.20$ and $0.26$ depending on the degree distribution, suggesting that either preventing or facilitating global cascades by adjusting inter-module connectivity may be a feasible intervention. Moreover, the extent of the cascade can change abruptly as a function of $\mu$ near $\mu_c$, meaning that small interventions on the number of inter-module links can have an outsized impact on the outcome of a cascade process. {We have proven the existence of $\mu_c$ for networks with an arbitrary number of modules, given that all modules have the same degree distribution and that inter-module link density is the same for all module pairs (see Appendix \ref{sec:well-ordering-appendix}). We have also observed numerically the existence of $\mu_c$ in the case that the modules have different degree distributions, and in so doing have found that the value of $\mu_c$ is a function primarily of the degree distribution of the non-seed module.} The same tools we use here may be applied to investigate more general cases when the connectivity among modules is more heterogeneous.

We also observed that when a network is highly degree-heterogeneous, cascades initiated at the highest-degree nodes become global with relatively few seed nodes, provided $\mu\ge \mu_c$. In contrast, seeding nodes uniformly at random can require significantly more seeds to initiate a global cascade on the same network. Figure \ref{fig:optimal_modularity_heatmaps} demonstrates that targeted seeding can create a global cascade in a regime where random seeding cannot, and shows that it can be seen both in direct simulations of the network dynamics and in the approximate formulation (\ref{eq:rho_update_degree_weighted})--(\ref{eq:qbar_update_closed_degree_weighted}). Similar to results for percolation on random networks with power law degree distributions, degree heterogeneity enhances the network’s resilience to random failures but increases its vulnerability to targeted attack.

The above results were established both through direct simulation of network dynamics, and a theoretical analysis valid in the limit $N\to \infty$. We have demonstrated that it is possible to efficiently approximate the outcome of a cascade process on a network in the case that initial activation probability depends on degree.
Notably we demonstrate that using only one dynamical variable, namely the conditional exposure probability, is sufficient to describe cascading dynamics accurately even in the case that the initialization is degree-dependent. {In doing so we have shown that the true generality of the conceptual approach introduced by Gleeson is greater than that reported in Ref. \cite{Gleeson2008}, and have obtained important physical insight into the effect of degree-dependent seeding on cascade dynamics.}


Finally we discuss the implications of our results for real situations such as cascading extinctions and information diffusion. To the extent that our assumptions are appropriate (i.e. degree-corrected SBM network and fractional threshold dynamics), we have shown that high-degree nodes (species or actors) are able to initiate large local cascades (of extinction or information respectively). Yet, at the same time, there is a critical level of inter-module connectivity below which a global cascade is impossible, regardless of seeding (i.e. initially extinct species or source of information). The results presented here should extend to systems with different assumptions beyond fractional threshold dynamics and SBM networks.

\section*{Acknowledgements}
JS thanks Nicholas Roberts for helpful discussions. We gratefully acknowledge support from the US Army Research Office MURI Award No. W911NF-13-1-0340,  and the Minerva Initiative Award Grant No. W911NF-15-1-0502.

\appendix

{
\section{Glossary}
\begin{itemize}
	\item $d$ - number of modules
	\item $p^{(i)}_k$  - probability that a node in module $i$ has degree $k$ 
	\item $e = (e_{ij})$  - mixing matrix: $e_{ij}$ is the probability that a link connects module $i$ to module $j$
	\item $N$ - number of nodes
	\item $A = (A_{ij})$ - adjacency matrix: $A_{ij} = 1$ if node $i$ is connected to node $j$, and zero otherwise
	\item $u_i(t) \in \{0, 1\}$ - the state of node $i$ at time $t$
	\item $F^{(i)}(m,k)$  -  probability that an inactive node in module $i$ having degree $k$ and $m$ active neighbors will become active.
	\item $\rho^{(i)}_{0,k}$  -  probability that a node in module $i$ with degree $k$ is active initially
	\item $q^{(i)}_{n,k}$  -  probability that a node in module $i$ at tree level $n$ with degree $k$ is active, \emph{given that its parent is inactive}
	\item $\overline{q}^{(i)}_n$  - the \emph{conditional exposure probability}: the probability that level-$n$ child of an inactive module-$i$ parent is active
	\item $\rho^{(i)}_{n,k}$  -  probability that a node in module $i$ at tree level $n$ with degree $k$ is active, unconditionally.
	\item $\mu$  - when $d=2$, the fraction of edges between modules
	\item $z$  -  mean degree
	\item $p_\text{nest}$  -  parameter controlling degree heterogeneity, according to $p_k =  p_\text{nest} p^\text{pow}_k + (1-p_\text{nest})p^\text{poi}_k$
	\item $p^\text{pow}_k$ - power-law degree distribution, given by $p^\text{pow}_k = {1 \over \zeta(\gamma, \lambda)} (\lambda + k )^{-\gamma}$
	\item $\gamma$  -  slope of power law tail
	\item $\lambda$ - location parameter of power law distribution, chosen to ensure $\sum_k k p^\text{pow}_k = z$
	\item $\zeta(\gamma, \lambda)$  -  Hurwitz zeta function
	\item $p^\text{poi}_k$ - Poisson degree distribution, given by  $p^\text{poi}_k = {z^k \eu^{-z}}/{k!}$
\end{itemize}
}
\section{Stub-matching for network sampling}
\label{sec:stub-matching}

Here we describe the stub-matching procedure we use to sample from the network ensemble defined in Sec. \ref{sec:network_model}.

Specifically, for each module $i$, we produce a list of \emph{stubs}, $S_i$, where the node index $l$ appears in the list a number of times equal to its degree, which is sampled from the distribution $p^{(i)}_k$ independently for each node in module $i$. (Note: before the stub lists are generated, one should check that the sampled degree sequence is graphical \cite{Hakimi1963}).

Once the stub lists are generated, we compute the number of edges, $E_{ij}$, that should run between modules $i$ and $j$ by scaling the matrix entries $e_{ij}$ by the total number of edges. Since it's not guaranteed that the total number of links will be exactly compatible, we take 
\begin{equation}
E_{ij} = \min\left\{|S_i|\frac{e_{ij}}{\sum_{j'} e_{ij'}}, |S_j|\frac{e_{ij}}{\sum_{i'} e_{i'j}}\right\}.
\end{equation}
{The normalizing factors $\sum_{j'} e_{ij'}$ and $\sum_{i'}e_{i'j}$ ensure that $\sum_j E_{ij} \le |S_i|$ and $\sum_i E_{ij} \le |S_j|$, i.e. that there will be enough stubs emanating from each module to complete the desired number of edges.}

Next, we permute each stub list at random. For each distinct pair of modules $i$ and $j$, we take $E_{ij}$ stubs from each of the shuffled lists, connect the corresponding nodes pairwise, and remove them from the stub lists. When $i=j$, we take $E_{ii}$ stubs from the shuffled stub list $S_{i}$ and connect them with each other at random (discarding one stub if $E_{ii}$ is odd). If any multi-edges or self-loops exist, remove them.

The sloppiness of (i) possibly not using all stubs and (ii) removing multi-edges and self-loops at the end of the algorithm mean that for small $n$, this algorithm produces networks whose distribution differs slightly from the exact specifications that we took to define them. However, these errors are expected to be insignificant in the $n\to\infty$ limit \cite{Molloy1995}.

\section{Validation}
\label{sec:Validation}
Here we demonstrate that the theoretical approach outlined in Sec. \ref{sec:degree_targeted_seeding} constitutes a good approximation to the dynamics of cascades on networks. To do this, we first replicate the numerical experiment performed by Gleeson in \cite{Gleeson2008}. The experiment consists of a network of four modules, connected in a ring (cf. right side of Fig. \ref{fig:comparison_nested_graphic}). Their degree distributions are Poisson with mean 5.8, Poisson with mean 8, regular with degree 8, and regular with degree 8, respectively. The mixing matrix is
\begin{equation}
e = {1 \over 29.8}
\begin{bmatrix}
5.5& 0.15& 0.15& 0 \\
0.15& 7.7& 0& 0.15 \\
0.15& 0& 7.7 & 0.15 \\
0& 0.15& 0.15& 7.7 \\
\end{bmatrix}.
\end{equation}
The initial condition is that a randomly chosen 1\% of the nodes in the first module are active, and the threshold is taken to be $\theta = 0.18$. For these parameter settings, the cascade eventually takes over the whole network, but reaches each of the modules at different times, due to their differing internal characteristics (i.e. degree distributions) and the nature of the links between them (as encoded by the matrix $e$).

To see the effect of seeding the highest-degree nodes in the network, we recreate calculations for the same network as in \cite{Gleeson2008} using the same fraction of nodes initially activated (i.e. 1\% of the first module), selecting nodes via either a random strategy or a degree-targeted strategy.

For concreteness, we now explicitly construct $\rho^{(i)}_{0,k}$ for the case where we seed only the highest-degree nodes in the network. Say we have, for each module, a target fraction of nodes that we would like to activate initially, $\rho^{(i)}_{0,\text{tot}}\in [0,1]$ for $i=1,\dots,d$. We would like to choose $\rho^{(i)}_{0,k}$ such that the appropriate total number of nodes are activated, but only the nodes of highest possible degree are chosen.

To do this, we need to find, for each module $i$, a value $K^i$ such that the fraction of nodes in module $i$ having degree greater than or equal to $K^i$ is close to $\rho^{(i)}_{0,\text{tot}}$. Formally, let
\begin{equation}
K^i = \min\left\{K \, \left| \, \sum_{k> K} p^{(i)}_k \le \rho^{(i)}_{0,\text{tot}} \right.\right\}
\end{equation}
with the understanding that the minimum of an empty set is $+\infty$. Then, define
\begin{equation}
\rho^{(i)}_{0,k} = 
\begin{cases}
	1 & k > K^i\\
	\rho^{(i)}_{0,\text{tot}} - \sum_{k> K^i} p^{(i)}_k & k = K^i\\
	0 & k < K^i
\end{cases}
\label{eq:degree_weighted_rho_defn}
\end{equation}

We then construct the ODE defined in Eq. (\ref{eq:ODE_degree_weighted}), with threshold $\theta = 0.18$ and update rate $f=0.01$. Results are shown in Fig. \ref{fig:seeding_strategy_comparison_with_numerical}. Note excellent agreement between theory and experiment for the ``baseline'' case, i.e. seed probability independent of degree, as considered by Gleeson; the solid and dashed curves nearly perfectly overlap each other. This confirms that our generalization reduces to the known model when seed probability is independent of degree. For the degree-targeted case, we also see nearly perfect agreement between the ODE approximation and direct simulation of the network dynamics, indicating that our analytic formulation captures the intended dynamics.

\begin{figure}
\includegraphics[width=\columnwidth]{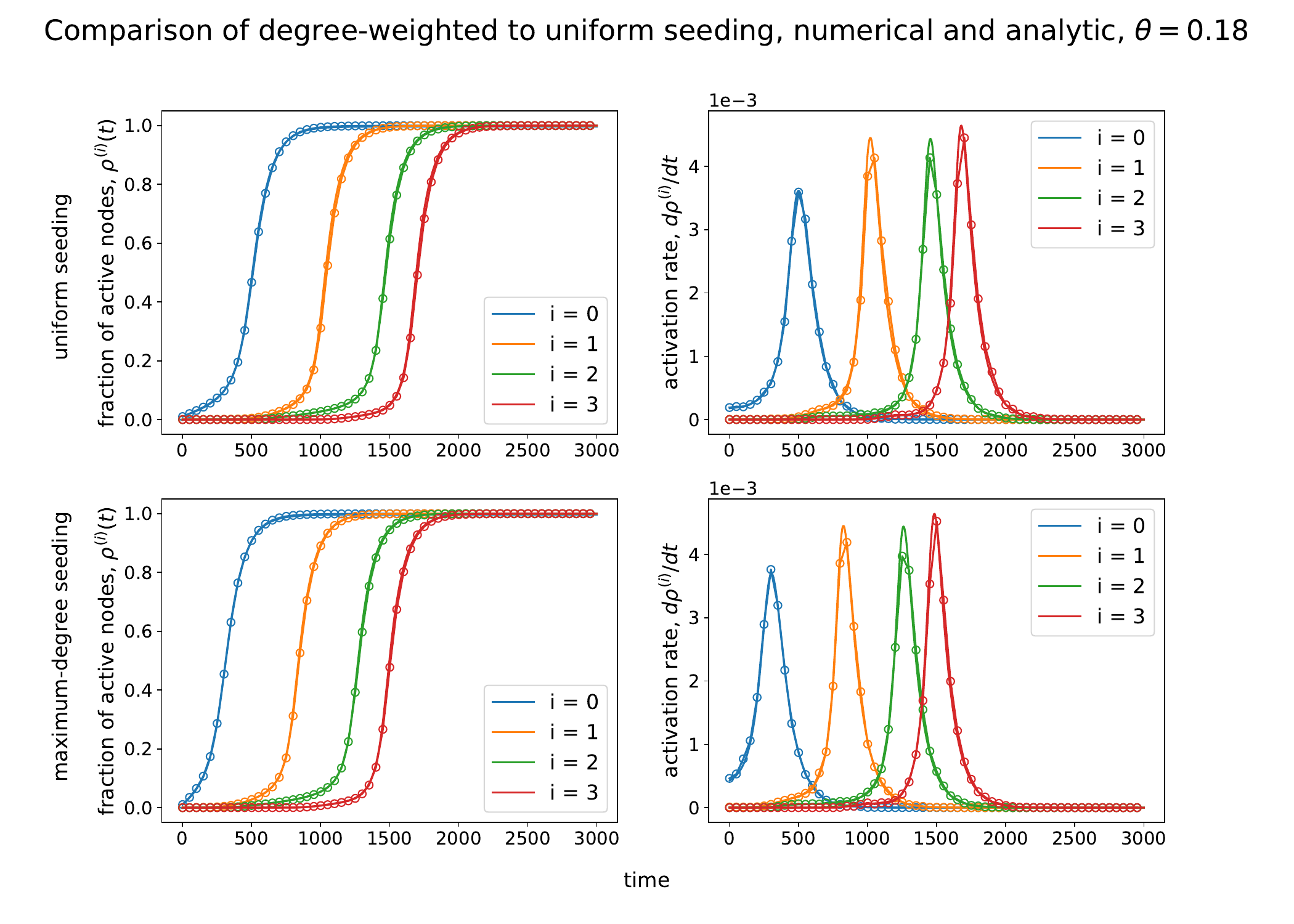}
\caption{Validating the ODEs obtained by the treelike approximation under the assumption of degree-dependent seeding. Parameters are identical to those used to create Fig. 1 in \cite{Gleeson2008}. Here we display both the total fraction of active nodes (left) and the rate of increase of the number of active nodes (right) in each module, for both uniform (top) and degree-targeted seeding (bottom). Solid curves are solutions of the ODE system Eq. (\ref{eq:ODE_degree_weighted}), and dashed curves are the corresponding quantities in a direct numerical simulation of threshold dynamics on a network of size $N = 5 \times 10^5$, averaged over 10 realizations of the network. \label{fig:seeding_strategy_comparison_with_numerical}}
\end{figure}

\begin{figure}
\includegraphics[width=0.69\columnwidth]{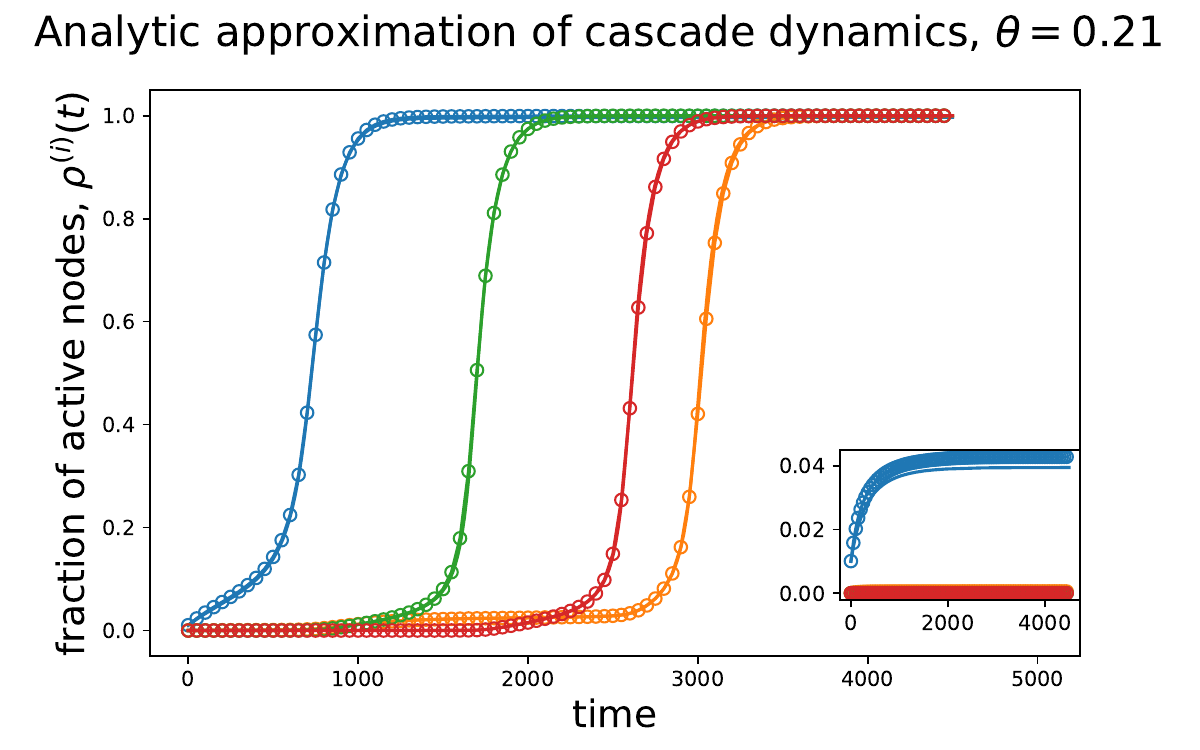}
\includegraphics[width=0.29\columnwidth]{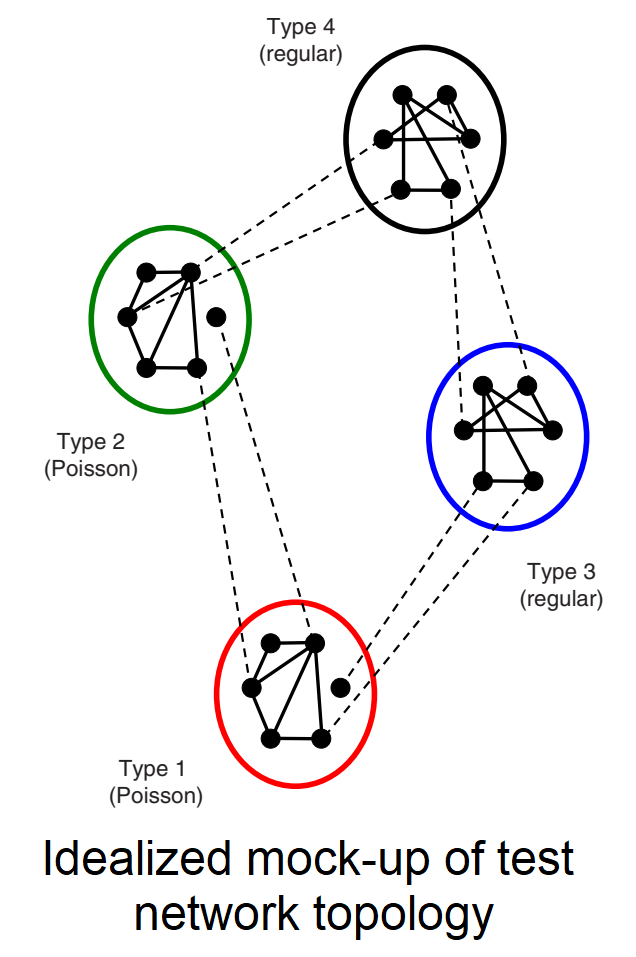}

\caption{Degree-targeted seeding creates a global cascade (main figure) in a regime where uniform seeding does not (inset), as demonstrated by both Eq. (\ref{eq:ODE_degree_weighted}) (solid curves) and dynamics on an actual network of size $5\times 10^5$ (open circles).
Network parameters are identical to those used in \cite{Gleeson2008}, and summarized by the right-hand diagram (figure from \cite{Gleeson2008}): degree distributions are Poisson with mean 5.8, Poisson with mean 8, regular with degree 8, and regular with degree 8.\label{fig:comparison_nested_graphic}}
\end{figure}

Next, to highlight the effect of maximum-degree seeding, we consider a slight modification of the above example. We take the same network as a above, but with threshold $\theta = 0.21$. As shown in Fig. \ref{fig:comparison_nested_graphic}, using this threshold value demonstrates that with a degree-targeted seeding strategy it is possible to excite a global cascade in a regime where a random seeding strategy would lead to only a tiny extent of spreading. This shows both the impact of targeted seeding and that the reduced model accurately approximates the true dynamics in both super- and sub-critical regimes.
%

\section{Proof that dynamics preserve partial order}
\label{sec:monotonicity-appendix}
Here we prove that the dynamics of the conditional exposure probability preserve the partial order $\le$ on $[0,1]^d$, a fact that will be useful in establishing subsequent bounds. {Namely, this fact helps us prove that if an initial condition localized to the first module would lead to complete activation of the whole network, then so would the initial condition in which all of the first module is active and none of the second is active.} To do this, we first prove a technical lemma.

\begin{lem} Let $k\in \mathbb{N}$, $m_c \in \{0, \dots, k-1\}$, and $x\in[0,1]$. Then
\begin{equation}
\frac{\partial}{\partial x} \left[\sum_{m=m_c}^{k-1} \binom{k-1}{m} x^m (1-x)^{k-1-m}\right] \ge 0
\end{equation}
\label{lem:monotonicity_of_binomial_tails}
\end{lem}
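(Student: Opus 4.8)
The plan is to recognize the bracketed quantity as (a relabeling of) the survival function of a binomial distribution and to show that its $x$-derivative telescopes to a single, manifestly nonnegative term. Write $n := k-1$ and $B_m(x) := \binom{n}{m} x^m (1-x)^{n-m}$, so that the quantity in brackets is $T(x) = \sum_{m=m_c}^{n} B_m(x) = \Pr[\operatorname{Bin}(n,x)\ge m_c]$. First I would dispose of the case $m_c = 0$: there $T(x) = \sum_{m=0}^{n} B_m(x) = (x + (1-x))^n = 1$ by the binomial theorem, so $T'(x) = 0$. Henceforth assume $1\le m_c \le n$.

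Next I would differentiate a single term and regroup using the absorption identities $m\binom{n}{m} = n\binom{n-1}{m-1}$ and $(n-m)\binom{n}{m} = n\binom{n-1}{m}$. Applying the product rule to $B_m(x)$ and substituting these gives
\begin{equation}
B_m'(x) = n\bigl[C_{m-1}(x) - C_m(x)\bigr], \qquad C_j(x) := \binom{n-1}{j} x^{j} (1-x)^{n-1-j},
\end{equation}
with the convention $C_{-1}\equiv C_n \equiv 0$ (consistent, since the corresponding binomial coefficients vanish). Summing over $m$ from $m_c$ to $n$ then telescopes:
\begin{equation}
T'(x) = \sum_{m=m_c}^{n} B_m'(x) = n\bigl[C_{m_c-1}(x) - C_n(x)\bigr] = n\binom{n-1}{m_c-1}\, x^{m_c-1}(1-x)^{n-m_c}.
\end{equation}
Since $1\le m_c \le n$, the exponents $m_c-1$ and $n-m_c$ are nonnegative, so every factor on the right is nonnegative for $x\in[0,1]$; hence $T'(x)\ge 0$, which is the claim upon substituting $n = k-1$.

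I do not expect a genuine obstacle here; the only care needed is the boundary bookkeeping in the telescoping sum (the vanishing of $C_{-1}$ and $C_n$) and the separate, trivial treatment of $m_c=0$. If a more conceptual argument were preferred, one could instead note that $T(x) = \Pr[\operatorname{Bin}(n,x)\ge m_c]$ is nondecreasing in $x$ by a monotone coupling: realize $\operatorname{Bin}(n,x)$ and $\operatorname{Bin}(n,x')$ for $x\le x'$ as sums of independent Bernoullis built from a common family of uniform random variables, so that the first is pointwise dominated by the second. The telescoping computation, however, is shorter and fully self-contained, so that is the version I would write up.
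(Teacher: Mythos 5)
Your proof is correct, but it takes a genuinely different route from the paper's. The paper interprets the bracketed sum as $\mathbb{P}(X\ge m_c)$ for $X\sim\operatorname{Bin}(k-1,x)$, differentiates term by term, and regroups the result as
\begin{equation*}
\frac{\mathbb{P}(X\ge m_c)}{x(1-x)}\left[\mathbb{E}(X\mid X\ge m_c)-\mathbb{E}(X)\right],
\end{equation*}
concluding nonnegativity from the (stated but not separately justified) inequality $\mathbb{E}(X\mid X\ge m_c)\ge\mathbb{E}(X)$. Your telescoping computation via the absorption identities $m\binom{n}{m}=n\binom{n-1}{m-1}$ and $(n-m)\binom{n}{m}=n\binom{n-1}{m}$ instead collapses the derivative to the single explicit term $n\binom{n-1}{m_c-1}x^{m_c-1}(1-x)^{n-m_c}$, which is manifestly nonnegative; this is the classical identity relating the binomial survival function to the regularized incomplete beta function. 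Your version buys something concrete: it is fully self-contained (no appeal to a conditional-expectation inequality), it handles the boundary cases $m_c=0$ and the degenerate exponents explicitly, and it yields a closed form for the derivative rather than merely a sign. The paper's version buys a probabilistic interpretation that meshes with how the lemma is invoked later (the quantity really is a tail probability of the number of active children), at the cost of leaving one intuitive step unproved and of a less clean intermediate display. Either proof is acceptable; yours is arguably the tighter one for a lemma whose only purpose is a sign determination.
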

\begin{proof}
Let $X$ denote the binomial random variable with $k-1$ trials and probability $x$ of success for each trial. Then
\begin{equation}
\mathbb{P}(X\ge m_c) = \sum_{m=m_c}^{k-1} \binom{k-1}{m} x^m (1-x)^{k-1-m}.
\end{equation}

Observe first that $\mathbb{E}(X\vert X\ge m_c) \ge \mathbb{E}(X)$.
Next, differentiating $\mathbb{P}(X\ge m_c)$ with respect to $x$ we have \cite{3315179}
\begin{widetext}
\begin{align}
\frac{\partial \mathbb{P}(X\ge m_c)}{\partial x} &= \sum_{m=m_c}^{k-1} \binom{k-1}{m} \frac{\partial}{\partial x}\left( x^m (1-x)^{k-1-m} \right)\\
& = \sum_{m=m_c}^{k-1} \binom{k-1}{m} \left( x^m (1-x)^{k-1-m} \right) \left[\frac{m}{x(1-x)} - \frac{k-1}{1-x}\right]\\
& = \frac{1}{x(1-x)}\left[\sum_{m=m_c}^{k-1} \binom{k-1}{m} x^m (1-x)^{k-1-m}m - \mathbb{P}(X\ge m_c)(k-1)x\right]\\
& = \frac{\mathbb{P}(X\ge m_c)}{x(1-x)} \left[\mathbb{E}(X\vert X\ge m_c) - \mathbb{E}(X)\right] \ge 0
\end{align}
\end{widetext}
and the proof is complete.
\end{proof}

\begin{thm}
\label{thm:time_evolution_respects_partial_order}
Let $G\colon [0,1]^d \to [0,1]^d$ be defined as in Eq. (\ref{eq:qbar_update_closed_degree_weighted}), and let $\le$ be the partial order on $[0,1]^d$ defined by $x\le y \iff x_i \le y_i \forall i$. Then $x \le y\implies G(x) \le G(y)$.
\end{thm}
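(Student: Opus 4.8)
The plan is to show that each component $G^{(i)}$ is a nondecreasing function of each entry of its argument $\overline{q}$, so that the componentwise order is preserved. Looking at Eq.~(\ref{eq:qbar_update_closed_degree_weighted}), $G^{(i)}(\overline{q})$ is a nonnegative linear combination (with weights $e_{ij}/\sum_j e_{ij} \ge 0$ and $\frac{k}{z^{(j)}}p^{(j)}_k \ge 0$) of terms of the form
\begin{equation*}
\rho_{0,k}^{(j)} + (1-\rho_{0,k}^{(j)}) \sum_{m=0}^{k-1} \binom{k-1}{m} (\overline{q}^{(j)}_n)^m (1-\overline{q}^{(j)}_n)^{k-1-m} F^{(j)}(m,k).
\end{equation*}
Since a nonnegative linear combination of nondecreasing functions is nondecreasing, and since $G^{(i)}$ depends on $\overline{q}$ only through the scalar arguments $\overline{q}^{(j)}$ for $j=1,\dots,d$, it suffices to prove that for each fixed module $j$, degree $k$, and admissible activation function $F^{(j)}(\cdot,k)$, the map
\begin{equation*}
x \mapsto \sum_{m=0}^{k-1} \binom{k-1}{m} x^m (1-x)^{k-1-m} F^{(j)}(m,k)
\end{equation*}
is nondecreasing on $[0,1]$. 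Here I would invoke the standing hypothesis that $F^{(j)}(m,k)$ is nondecreasing in $m$ (the activation probability is a nondecreasing function of the number of active neighbors), which is exactly the model class assumed in the Problem Statement.

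The key step is therefore to establish this last monotonicity claim. I would do this by a stochastic-dominance / Abel summation argument: writing $F^{(j)}(m,k) = F^{(j)}(0,k) + \sum_{\ell=1}^{m}\Delta_\ell$ where $\Delta_\ell := F^{(j)}(\ell,k) - F^{(j)}(\ell-1,k) \ge 0$, and exchanging the order of summation, the expectation $\mathbb{E}[F^{(j)}(X,k)]$ for $X\sim\mathrm{Binomial}(k-1,x)$ becomes $F^{(j)}(0,k) + \sum_{\ell=1}^{k-1}\Delta_\ell\,\mathbb{P}(X\ge \ell)$. Each tail probability $\mathbb{P}(X\ge\ell) = \sum_{m=\ell}^{k-1}\binom{k-1}{m}x^m(1-x)^{k-1-m}$ is nondecreasing in $x$ — this is precisely Lemma~\ref{lem:monotonicity_of_binomial_tails} (applied with $m_c = \ell$). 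Since the $\Delta_\ell$ are nonnegative, the whole expression is a nonnegative combination of nondecreasing functions, hence nondecreasing. This gives monotonicity of each $G^{(i)}$ in each coordinate of $\overline{q}$.

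Finally I would assemble the pieces: if $x \le y$ in $[0,1]^d$, then $x^{(j)} \le y^{(j)}$ for every $j$, so by coordinatewise monotonicity of $G^{(i)}$ we get $G^{(i)}(x) \le G^{(i)}(y)$ for every $i$, i.e.\ $G(x)\le G(y)$. I do not expect any serious obstacle here: the only nontrivial input is Lemma~\ref{lem:monotonicity_of_binomial_tails}, which is already proved in the preceding appendix, and the reduction to it is essentially bookkeeping. The one point to state carefully is that $G^{(i)}$ genuinely factors through the scalars $\overline{q}^{(j)}$ (no cross terms), so that coordinatewise monotonicity of a one-variable function is all that is needed; this is immediate from the form of Eq.~(\ref{eq:qbar_update_closed_degree_weighted}).
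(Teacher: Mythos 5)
Your proposal is correct and follows essentially the same route as the paper's proof: both reduce coordinatewise monotonicity of $G^{(i)}$ to the monotonicity of $x\mapsto\mathbb{E}[F^{(j)}(X,k)]$ for $X\sim\mathrm{Binomial}(k-1,x)$, decompose $F^{(j)}(\cdot,k)$ into nonnegative increments via Abel summation, and invoke Lemma~\ref{lem:monotonicity_of_binomial_tails} on the binomial tail probabilities. Your explicit handling of the nonnegative prefactors $e_{ij}$, $kp^{(j)}_k/z^{(j)}$, and $(1-\rho^{(j)}_{0,k})$ is, if anything, slightly more careful than the paper's presentation.
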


\begin{proof}
Let $x,y\in [0,1]^d$ such that $x\le y$. Then consider the difference $G(y) - G(x)$, which we argue is non-negative in each component. The $i^\text{th}$ component of $G(y)-G(x)$ is given by
\begin{align}
\left(G(y) - G(x)\right)_i 
&= \sum_j e_{ij} \sum_k k \frac{p^{(j)}_k}{z^{(j)}} \left[B(y_j;j,k) - B(x_j;j,k)\right]
\end{align}
where
\begin{equation}
B(x;j,k) \coloneqq \sum_m \binom{k-1}{m}x^m(1-x)^{k-1-m}F^{(j)}(m,k).
\label{eq:biom_sum_defn}
\end{equation}
Our aim is to show that $\partial B(x;j,k)/\partial x \ge 0$ so that $x_j \le y_j\implies B(x_j;j,k) \le B(y_j;j,k)$ for all $j,k$, rendering $(G(y) - G(x))_i$ a sum of nonnegative terms.

Since $F^{(j)}(m,k)$ is non-decreasing in $m$, we can write it as the partial sum of a non-negative sequence:
$
F^{(j)}(m,k) = \sum_{l=0}^{m} f^{(j)}(l,k)
$
where $f^{(j)}(l,k)\ge 0$. We then have, by rearranging the double sum,
\begin{align}
B(x;j,k) &= \sum_m \binom{k-1}{m}x^m(1-x)^{k-1-m}\sum_{l=0}^{m}f^{(j)}(l,k)\nonumber\\
& = \sum_{l=0}^{k-1} f^{(j)}(l,k) \sum_{m=l}^{k-1} \binom{k-1}{m}x^m(1-x)^{k-1-m}\nonumber \\
& = \sum_{l=0}^{k-1} f^{(j)}(l,k) \mathbb{P}(X\ge l). \nonumber
\end{align}
Finally,
\begin{equation}
\frac{\partial B(x;j,k)}{\partial x} = \sum_{l=0}^{k-1} f^{(j)}(l,k) \frac{\partial}{\partial x}\mathbb{P}(X\ge l)\nonumber
\end{equation}
and by Lemma \ref{lem:monotonicity_of_binomial_tails}, $\partial \mathbb{P}(X\ge l) / \partial x \ge 0$, so $\partial B(x;j,l) / \partial x \ge 0$, and we are done. Thus we have shown that if $x \le y$, then $G(x) \le  G(y)$.
\end{proof}

\section{Proof of well-ordering of activation}
\label{sec:well-ordering-appendix}
Now we prove a theorem in the setting considered in Sec. \ref{sec:optimal_modularity}. {Namely, we prove that if intra-module link density is greater than inter-module link density, and if the conditional exposure probability is initially largest in the first module, then it will be largest in the first module at all subsequent times. This fact is sufficient to establish the existence of a critical level of inter-module connectivity, $\mu_c$, such that no cascade can become global if seed nodes are only selected from one module and $\mu<\mu_c$.}

Denoting by $d$ the number of modules and by $\mu$ the link density between different modules,
we now prove that for $\mu<1/d$, $\overline{q}^{(1)}_{n}>\overline{q}^{(i)}_{n}$ implies $\overline{q}^{(1)}_{n+1}>\overline{q}^{(i)}_{n+1}$ for all $i\neq 1$.

\begin{thm}
\label{thm:first_module_always_dominates}
Let $G \colon [0,1]^d \to [0,1]^d$ be defined as in Eq. (\ref{eq:qbar_update_closed_degree_weighted}) with $p^{(i)}_k \equiv p_k$ for all $i,k$, $F^{(i)} \equiv F$ for all $i$, and
\begin{equation}
e_{ij} = \begin{cases}
\mu & i\neq j\\
1 - (d-1)\mu & i = j
\end{cases}
\end{equation}
with $\mu < 1/d$. Assume further that
\begin{equation}
\rho_{0,k}^{(i)} = \begin{cases}
\rho_{0,k} & i = 1\\
0 & \text{else}.
\end{cases}
\end{equation}
Then for every $i \in \{2, \dots, d\}$, $\overline{q}^{(1)} \ge \overline{q}^{(i)} \implies G(\overline{q})^{(1)} \ge G(\overline{q})^{(i)}$.
\end{thm}
\begin{proof}
Let $i \in \{2, \dots, d\}$ and let $\overline{q}\in [0,1]^d$ such that $\overline{q}^{(1)} \ge \overline{q}^{(i)}$. Because we have assumed $F^{(i)} \equiv F$ for all $i$, we suppress the dependence on module index $j$ in the binomial-like sum (\ref{eq:biom_sum_defn}) and instead write $B_k(x)$. Then
\begin{widetext}
\begin{align}
G(\overline{q})^{(1)} - G(\overline{q})^{(i)} &= \sum_j e_{1j} \left[\sum_k \frac{k}{z}p_k \left(\rho_{0,k}^{(j)} + (1-\rho_{0,k}^{(j)}) B_k(\overline{q}^{(j)})\right)\right] -\sum_j e_{ij} \left[\sum_k \frac{k}{z}p_k \left(\rho_{0,k}^{(j)} + (1-\rho_{0,k}^{(j)}) B_k(\overline{q}^{(j)})\right)\right]\nonumber\\
&= \sum_j (e_{1j} - e_{ij}) \left[\sum_k \frac{k}{z}p_k \left(\rho_{0,k}^{(j)} + (1-\rho_{0,k}^{(j)}) B_k(\overline{q}^{(j)})\right)\right]\nonumber
\end{align}
Due to the form of the mixing matrix, most terms in this expression vanish, except for those where $j\in \{1,i\}$. Thus we have
\begin{align}
G(\overline{q})^{(1)} - G(\overline{q})^{(i)} &= (1-d\mu)\Bigg[ \sum_k \frac{k}{z}p_k \bigg(\rho_{0,k} + (1-\rho_{0,k}) B_k(\overline{q}^{(1)})\bigg) - \sum_k \frac{k}{z}p_k B_k(\overline{q}^{(i)})\Bigg]\nonumber\\
&= (1-d\mu) \sum_k \frac{k}{z}p_k\Bigg[ \rho_{0,k}\left(1-B_k(\overline{q}^{(1)})\right) + \left(B_k(\overline{q}^{(1)}) - B_k(\overline{q}^{(i)}) \right)\Bigg]. \label{eq:line_with_the big-ass_term}
\end{align}
\end{widetext}

By Lemma \ref{lem:monotonicity_of_binomial_tails} and the fact that $\overline{q}^{(1)} \ge \overline{q}^{(i)}$, the term in the second set of parentheses in Eq. (\ref{eq:line_with_the big-ass_term}) is non-negative. Since by assumption $\mu< 1/d$, $1-d\mu>0$; $kp_k / z\ge 0$; $\rho_{0,k}\ge 0$; and $B_k(\overline{q}^{(i)}) \le 1$. Therefore $G(\overline{q})^{(1)}\ge G(\overline{q})^{(i)}$.
\end{proof}

Applying Theorem \ref{thm:first_module_always_dominates} in the case $d=2$ gives precisely the statement in Sec. \ref{sec:optimal_modularity}.

\section{Next-nearest-neighbor ring of modules}
\label{sec:next-nearest-neighbor-appdx}
Here we expand on the experiments performed on the multiple-module system described in Sec. \ref{sec:general_network_structures}. Namely, Fig. \ref{fig:summary_fig_multiple_modules_m_2} shows a summary of results obtained in the case that modules share links with not only their nearest-neighbor modules, but also next-nearest-neighbor modules. In this case, the mixing matrix is
\begin{equation}
e_{ij} = \begin{cases}
\mu & 0 < |i - j| \mod d \le 2\\
1-4\mu & i = j\\
0 & \text{else}
\end{cases}
\end{equation}
and now we restrict $\mu$ to the range $0 < \mu < 1/5$, so that we still have the condition that each module has more links with itself than with any other module. Notably, there are no cases where activation of the entire network is obtained. This is due to the fact that the constraint $\mu<1/5$ is too restrictive to allow cascades to spread from one module to the rest of the network. Nonetheless, the trends that are visible in these heatmaps are consistent with the $\mu<1/5$ portion of those depicted in Fig. \ref{fig:summary_fig_multiple_modules_m_1}, indicating that there is not a very significant difference in the dynamics between nearest-neighbor and next-nearest-neighbor coupling in a ring of modules.

\begin{figure}
\includegraphics[width=\columnwidth]{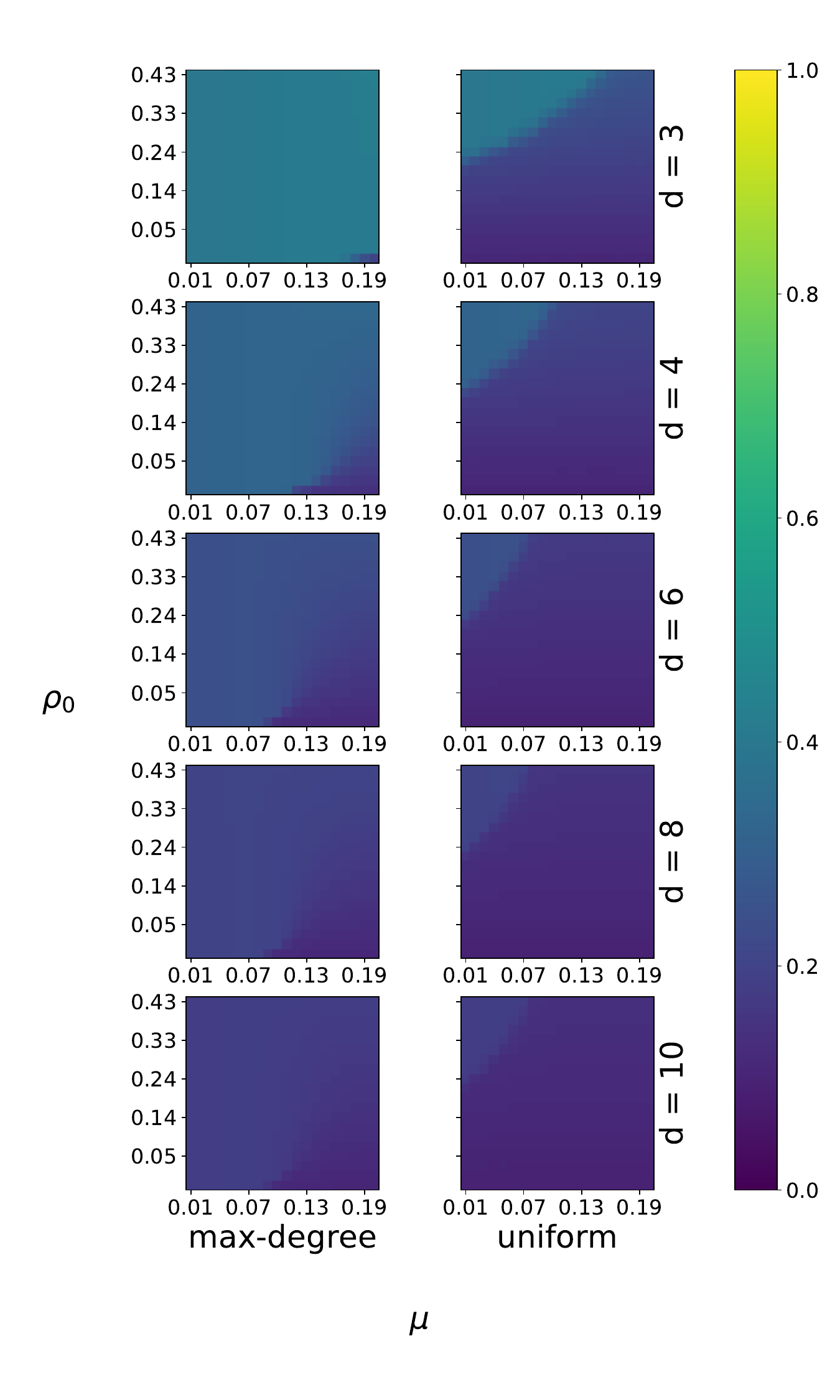}
\caption{As in Fig. \ref{fig:summary_fig_multiple_modules_m_1} but for the case that modules share links with both nearest and next-nearest neighbors in the ring of modules. Note that the range of $\mu$ considered here is smaller than that considered in the nearest-neighbor case. \label{fig:summary_fig_multiple_modules_m_2}}
\end{figure}

\bibliography{}

\begin{thebibliography}{10}

\bibitem{Pastor-Satorras2001}
Romualdo Pastor-Satorras and Alessandro Vespignani.
\newblock {Epidemic spreading in scale-free networks}.
\newblock {\em Physical Review Letters}, 86(14):3200--3203, 2001.

\bibitem{Watts2002}
Duncan~J. Watts.
\newblock {A simple model of global cascades on random networks}.
\newblock {\em Proc. Natl. Acad. Sci.}, 99(9):5766--5771, 2002.

\bibitem{Saavedra2011}
Serguei Saavedra, Daniel~B. Stouffer, Brian Uzzi, and Jordi Bascompte.
\newblock {Strong contributors to network persistence are the most vulnerable
  to extinction}.
\newblock {\em Nature}, 478(7368):233--235, 2011.

\bibitem{May2008}
Robert~M. May, Simon~A. Levin, and George Sugihara.
\newblock {Ecology for bankers}.
\newblock {\em Nature}, 451(7181):893--894, 2008.

\bibitem{Memmott2004}
Jane Memmott, Nickolas~M Waser, and Mary~V Price.
\newblock {Tolerance of pollination networks to species extinctions}.
\newblock {\em Proceedings of the Royal Society of London. Series B: Biological
  Sciences}, 271(1557):2605--2611, dec 2004.

\bibitem{Burgos2007}
Enrique Burgos, Horacio Ceva, Roberto~P.J. Perazzo, Mariano Devoto, Diego
  Medan, Mart{\'{i}}n Zimmermann, and Ana {Mar{\'{i}}a Delbue}.
\newblock {Why nestedness in mutualistic networks?}
\newblock {\em Journal of Theoretical Biology}, 249(2):307--313, nov 2007.

\bibitem{Turalska2019}
Malgorzata Turalska, Keith Burghardt, Martin Rohden, Ananthram Swami, and
  Raissa~M. D'Souza.
\newblock {Cascading failures in scale-free interdependent networks}.
\newblock {\em Physical Review E}, 99(3):032308, mar 2019.

\bibitem{Dunne2002}
Jennifer~A. Dunne, Richard~J. Williams, and Neo~D. Martinez.
\newblock {Network structure and biodiversity loss in food webs: robustness
  increases with connectance}.
\newblock {\em Ecology Letters}, 5(4):558--567, jul 2002.

\bibitem{Sole2001}
Ricard~V. Sol{\'{e}} and M.~Montoya.
\newblock {Complexity and fragility in ecological networks}.
\newblock {\em Proceedings of the Royal Society of London. Series B: Biological
  Sciences}, 268(1480):2039--2045, oct 2001.

\bibitem{Morris2003}
William~F. Morris.
\newblock {Which Mutualists Are Most Essential? Buffering of Plant Reproduction
  against the Extinction of Pollinators}.
\newblock In Peter Kareiva and Simon~A. Levin, editors, {\em The Importance of
  Species}, pages 260--280. Princeton University Press, Princeton, dec 2003.

\bibitem{Vicario2016}
Michela~Del Vicario, Alessandro Bessi, Fabiana Zollo, Fabio Petroni, Antonio
  Scala, Guido Caldarelli, H.~Eugene Stanley, and Walter Quattrociocchi.
\newblock {The spreading of misinformation online}.
\newblock {\em Proceedings of the National Academy of Sciences of the United
  States of America}, 113(3):554--559, 2016.

\bibitem{Morone2015}
Flaviano Morone and Hern{\'{a}}n~A. Makse.
\newblock {Influence maximization in complex networks through optimal
  percolation}.
\newblock {\em Nature}, 524(7563):65--68, 2015.

\bibitem{Iribarren2011}
Jos{\'{e}}~Luis Iribarren and Esteban Moro.
\newblock {Branching dynamics of viral information spreading}.
\newblock {\em Physical Review E - Statistical, Nonlinear, and Soft Matter
  Physics}, 84(4):1--13, 2011.

\bibitem{Callaway2000}
Duncan~S. Callaway, M.~E.~J. Newman, Steven~H. Strogatz, and Duncan~J. Watts.
\newblock {Network robustness and fragility: percolation on random graphs}.
\newblock {\em Physical Review Letters}, 85(25):5468--5471, 2000.

\bibitem{Cohen2001}
Reuven Cohen, Keren Erez, Daniel Ben-Avraham, and Shlomo Havlin.
\newblock {Breakdown of the Internet under Intentional Attack}.
\newblock {\em Physical Review Letters}, 86(16):3682--3685, apr 2001.

\bibitem{Albert2000}
R{\'{e}}ka Albert, Hawoong Jeong, and Albert-L{\'{a}}szl{\'{o}} Barab{\'{a}}si.
\newblock {Error and attack tolerance of complex networks}.
\newblock {\em Nature}, 406(6794):378--382, jul 2000.

\bibitem{Broder2011}
Andrei Broder, Ravi Kumar, Farzin Maghoul, Prabhakar Raghavan, Sridhar
  Rajagopalan, Raymie Stata, Andrew Tomkins, and Janet Wiener.
\newblock {Graph structure in the Web}.
\newblock {\em Computer Networks}, 33(1-6):309--320, jun 2000.

\bibitem{Karrer2011}
Brian Karrer and M.~E.~J. Newman.
\newblock {Stochastic blockmodels and community structure in networks}.
\newblock {\em Physical Review E}, 83(1):016107, jan 2011.

\bibitem{Newman2003}
M.~E.~J. Newman.
\newblock {The structure and function of complex networks}.
\newblock {\em SIAM Review}, 45(2):167--256, 2003.

\bibitem{Nematzadeh2014}
Azadeh Nematzadeh, Emilio Ferrara, Alessandro Flammini, and Yong~Yeol Ahn.
\newblock {Optimal network modularity for information diffusion}.
\newblock {\em Physical Review Letters}, 113(8):1--5, 2014.

\bibitem{Gleeson2008}
James~P. Gleeson.
\newblock {Cascades on correlated and modular random networks}.
\newblock {\em Physical Review E - Statistical, Nonlinear, and Soft Matter
  Physics}, 77(4):1--10, 2008.

\bibitem{Bak1987}
Per Bak, Chao Tang, and Kurt Wiesenfeld.
\newblock {Self-organized criticality: An explanation of the 1/ f noise}.
\newblock {\em Physical Review Letters}, 59(4):381--384, jul 1987.

\bibitem{Bak1988}
Per Bak, Chao Tang, and Kurt Wiesenfeld.
\newblock {Self-organized criticality}.
\newblock {\em Physical Review A}, 38(1):364--374, jul 1988.

\bibitem{Brummitt2012}
Charles~D. Brummitt, R.~M. D'Souza, and E.~A. Leicht.
\newblock {Suppressing cascades of load in interdependent networks}.
\newblock {\em Proceedings of the National Academy of Sciences},
  109(12):E680--E689, 2012.

\bibitem{Granovetter1978}
Mark Granovetter.
\newblock {Threshold Models of Collective Behavior}.
\newblock {\em American Journal of Sociology}, 83(6):1420--1443, may 1978.

\bibitem{Hackett2011}
Adam~W. Hackett.
\newblock {\em {Cascade dynamics on complex networks.}}
\newblock PhD thesis, 2011.

\bibitem{Newman2003a}
M.~E.~J. Newman.
\newblock {Mixing patterns in networks}.
\newblock {\em Physical Review E}, 67(2):026126, feb 2003.

\bibitem{Molloy1995}
Michael Molloy and Bruce Reed.
\newblock {A critical point for random graphs with a given degree sequence}.
\newblock {\em Random Structures {\&} Algorithms}, 6(2-3):161--180, mar 1995.

\bibitem{Campbell2010}
Colin Campbell, S.~Yang, R.~Albert, and K.~Shea.
\newblock {A network model for plant-pollinator community assembly}.
\newblock {\em Proceedings of the National Academy of Sciences},
  108(1):197--202, jan 2011.

\bibitem{Campbell2012}
Colin Campbell, Suann Yang, Katriona Shea, and R{\'{e}}ka Albert.
\newblock {Topology of plant-pollinator networks that are vulnerable to
  collapse from species extinction}.
\newblock {\em Physical Review E - Statistical, Nonlinear, and Soft Matter
  Physics}, 86(2):1--8, 2012.

\bibitem{Motter2004}
Adilson~E. Motter.
\newblock {Cascade control and defense in complex networks}.
\newblock {\em Physical Review Letters}, 93(9):1--4, 2004.

\bibitem{Cornelius2013}
Sean~P Cornelius, William~L Kath, and Adilson~E. Motter.
\newblock {Realistic control of network dynamics}.
\newblock {\em Nature Communications}, 4:1942, jun 2013.

\bibitem{Sahasrabudhe2011}
Sagar Sahasrabudhe and Adilson~E. Motter.
\newblock {Rescuing ecosystems from extinction cascades through compensatory
  perturbations}.
\newblock {\em Nature Communications}, 2(1):170--178, 2011.

\bibitem{Gleeson2007}
James~P. Gleeson and Diarmuid~J. Cahalane.
\newblock {Seed size strongly affects cascades on random networks}.
\newblock {\em Physical Review E}, 75(5):056103, may 2007.

\bibitem{Lancichinetti2008}
Andrea Lancichinetti, Santo Fortunato, and Filippo Radicchi.
\newblock {Benchmark graphs for testing community detection algorithms}.
\newblock {\em Physical Review E}, 78(4):046110, oct 2008.

\bibitem{Atmar1993}
W~Atmar and B~D Patterson.
\newblock {The measure of order and disorder in the distribution of species in
  fragmental habitat}.
\newblock {\em Oecologia}, 96:373--382, 1993.

\bibitem{Bascompte2003}
Jordi Bascompte, P.~Jordano, C.~J. Melian, and Jens~M. Olesen.
\newblock {The nested assembly of plant-animal mutualistic networks}.
\newblock {\em Proceedings of the National Academy of Sciences},
  100(16):9383--9387, 2003.

\bibitem{Jonhson2013}
Samuel Johnson, Virginia Dom{\'{i}}nguez-Garc{\'{i}}a, and Miguel~A.
  Mu{\~{n}}oz.
\newblock {Factors Determining Nestedness in Complex Networks}.
\newblock {\em PLoS ONE}, 8(9):e74025, sep 2013.

\bibitem{Cohen2000}
Reuven Cohen, Keren Erez, Daniel Ben-Avraham, and Shlomo Havlin.
\newblock {Resilience of the Internet to Random Breakdowns}.
\newblock {\em Physical Review Letters}, 85(21), dec 2000.

\bibitem{Doyle2005}
John~C Doyle, David~L Alderson, Lun Li, Steven Low, Matthew Roughan, Stanislav
  Shalunov, Reiko Tanaka, and Walter Willinger.
\newblock {The "robust yet fragile" nature of the Internet.}
\newblock {\em Proceedings of the National Academy of Sciences of the United
  States of America}, 102(41):14497--502, 2005.

\bibitem{Melnik2011}
Sergey Melnik, Adam Hackett, Mason~A. Porter, Peter~J. Mucha, and James~P.
  Gleeson.
\newblock {The unreasonable effectiveness of tree-based theory for networks
  with clustering}.
\newblock {\em Physical Review E - Statistical, Nonlinear, and Soft Matter
  Physics}, 83(3):1--12, 2011.

\bibitem{Hakimi1963}
S.~L. Hakimi.
\newblock {On Realizability of a Set of Integers as Degrees of the Vertices of
  a Linear Graph II. Uniqueness}.
\newblock {\em Journal of the Society for Industrial and Applied Mathematics},
  11(1):135--147, mar 1963.

\bibitem{3315179}
leonbloy (https://math.stackexchange.com/users/312/leonbloy).
\newblock Monotonicity of the cdf of a binomial distribution.
\newblock Mathematics Stack Exchange.
\newblock URL:https://math.stackexchange.com/q/3315179 (version: 2019-08-11).

\end{thebibliography}

\end{document}